\newtheorem{theorem}{Theorem}
\newtheorem{proposition}{Proposition}
\newtheorem{lemma}{Lemma}
\newtheorem{corollary}{Corollary}
\newtheorem{question}{Question}
\theoremstyle{definition}
\newtheorem{definition}{Definition}
\theoremstyle{remark}
\newtheorem{remark}{Remark}
\numberwithin{equation}{section}
\newcommand{\field}[1]{\ensuremath{\mathbb{#1}}}
\newcommand{\RR}{\field{R}}
\newcommand{\TT}{\field{T}}
\newcommand{\ZZ}{\field{Z}}
\newcommand{\curly}[1]{\mathscr{#1}}
\newcommand{\cA}{\curly{A}}
\newcommand{\cB}{\curly{B}}
\newcommand{\cF}{\curly{F}}
\newcommand{\cG}{\curly{G}}
\newcommand{\cI}{\curly{I}}
\newcommand{\cL}{\curly{L}}
\newcommand{\cU}{\curly{U}}
\newcommand{\NN}{\field{N}}
 \DeclareMathOperator{\Hom}{Hom}
\newcommand{\noop}[1]{}
\begin{document}

\title[Thin homotopy and the holonomy approach]{Thin homotopy and the holonomy approach to gauge theories}


\author[Meneses]{Claudio Meneses}
\address{Mathematisches Seminar, Christian-Albrechts Universit\"at zu Kiel, Heinrich-Hecht- \indent Platz 6, 
24118 Kiel, Germany}
\email{meneses@math.uni-kiel.de} 

\thanks{}

\thanks{}

\subjclass[2010]{Primary 53C29, 55P10, 51H25; Secondary 81Q70}

\date{}

\dedicatory{}
\begin{abstract}
We survey several mathematical developments in the holonomy approach to gauge theory. A cornerstone of this approach is the introduction of group structures on spaces of based loops on a smooth manifold, relying on certain homotopy equivalence relations --- such as the so-called thin homotopy --- and the resulting interpretation of gauge fields as group homomorphisms to a Lie group $G$ satisfying a suitable smoothness condition, encoding the holonomy of a gauge orbit of smooth connections on a principal $G$-bundle. We also prove several structural results on thin homotopy, and in particular we clarify the difference between thin equivalence and retrace equivalence for piecewise-smooth based loops on a smooth manifold, which are often used interchangeably in the physics literature. We conclude by listing a set of questions on topological and functional analytic aspects of groups of based loops, which we consider to be fundamental to establish a rigorous differential geometric foundation of the holonomy formulation of gauge theory. 
\\

\noindent{\it Keywords}: thin homotopy, gauge field, holonomy.
\end{abstract}

\maketitle


\tableofcontents
\section{Introduction}
Since the early developments of gauge theory, the idea of ``nonintegrable phase factors" of a gauge field has been recognized as a fundamental notion on which the study of gauge symmetry could be built \cite{AB59,Man62a}. In the same way that gauge fields can be modeled mathematically as orbits in a space of smooth connections on a smooth principal $G$-bundle $P\rightarrow M$ under the action of the infinite dimensional group of bundle automorphisms \cite{WY75}, the phase factors associated to a gauge field can be modeled geometrically in terms of the holonomy (or more precisely, its conjugacy class in $G$) of any such an orbit representative around any given based loop \cite{Yang74}.  Although the mathematical idea of reconstructing a gauge orbit of smooth connections on a principal $G$-bundle in terms of the holonomies of orbit representatives is rather old and seems to have originated in the work of Kobayashi \cite{Koba54}, considering the latter as a substitute for the former is usually reserved for connections satisfying further integrability conditions such as flatness. An apparent reason that justifies this state of affairs is the following. While the holonomy of a flat connection depends only on the structure of the fundamental group $\pi_{1}(M,p)$ of the pointed base manifold $(M,p)$, the holonomy of an arbitrary smooth connection would depend on the structure of a considerably more complicated group $\cL^{\boldsymbol{\cdot}}(M,p)$ of equivalence classes of based loops, constructed in terms of weaker homotopy equivalences for which holonomy is a well-defined invariant. Thus, in the same way that a gauge orbit of flat connections may be equivalently characterized as a $G$-conjugacy class of group homomorphisms $\rho:\pi_{1}(M,p)\rightarrow G$, a gauge orbit of smooth connections can be equivalently characterized in terms of a $G$-conjugacy class of \emph{holonomy homomorphisms} \cite{Koba54,Bar91,Lew93,CP94}
\[
H:\cL^{\boldsymbol{\cdot}}(M,p)\rightarrow G
\]
satisfying certain suitable smoothness conditions. However, the topological groups $\cL^{\boldsymbol{\cdot}}(M,p)$ are still poorly understood, and do not seem to be widely known to differential geometers.  

It is thus not surprising that the study of gauge symmetry in terms of holonomy and parallel transport --- the so-called loop approach in the physics literature --- has not experienced a similar dramatic development into a mature mathematical discipline when compared to the more standard approaches to mathematical gauge theory. Any serious intent to rectify this unfortunate missed opportunity would necessarily rely on the development of pertinent analytic tools in the study of the geometry of spaces of based loops in a manifold, similar in spirit to the general programme elucidated in the works \cite{KM97,Sta09}.

Two loop homotopy relations that are often used interchangeably in the physics literature are those of $\emph{thin homotopy}$ \cite{Bar91,CP94} and \emph{retrace homotopy} \cite{Koba54,Tel60,Lew93}. Both relations lead to equivalence relations that allow to define groups of based loops $\cL^{\boldsymbol{\cdot}}(M,p)$. However, while retrace homotopy is a particular case of thin homotopy, it is possible to construct examples of thin homotopies that cannot be factored as a finite concatenation of reparametrizations and retracings. One of the main objectives of this article is to clarify the difference between these relations when defined over the space of piecewise-smooth based loops $\Omega^{\text{ps}}(M,p)$ of a smooth pointed manifold $(M,p)$, which is presented in detail in sections \ref{sec:thin} and \ref{sec:results}. On the other hand, the groups $\cL^{\boldsymbol{\cdot}}(M,p)$ could be topologized as quotients in terms of a suitable function space topology on the set  $\Omega^{\text{ps}}(M,p)$. Our results suggest that a topological characterization of thin homotopy is also possible, with retracings as ``basic building blocks" through a limiting process, in the spirit of Whitney's approximation theorems \cite{Whi34}. We hope to return to these questions in the future.

In section \ref{sec:survey} we have tried to compile a sufficiently extensive list of references reflecting the historical development of the holonomy approach to gauge theory in geometry and physics. In this regard, a comprehensive compendium of the physics literature on the subject can be found in the book \cite{GP96}. A rather unfortunate feature of the existing mathematical literature on the subject is its evident scarcity and sparsity. We have attempted to collect what we consider to be a minimal comprehensive set of mathematical bibliography that illustrates our point, to the best of our abilities, while avoiding surveying the standard mathematical approach that originated from Atiyah's school \cite{Atiyah88}.\footnote{Besides Atiyah's many excellent expositions on the subject, another interesting mathema\-tical survey on the geometry of vector bundles and connections as a very convenient language in physics, and its relation to the theory of moduli spaces, is given in \cite{Var03}.} Given that the existing mathematical literature on the holonomy approach to gauge theory is the result of the work of many researchers and communities over the last 70 years, it is plausible that we could have made several important omissions. We apologize if that were the case. 

\vspace{3mm}

\noindent \textbf{Notes added on proof.} After the first version of this manuscript was made public, I was notified by Tamer Tlas of his work \cite{Tlas16}. Among other things, he proves a stronger version of theorem \ref{theo:smooth-thin}. This article has been modified to include his findings. After this article was in final form, I learned of Driver's work \cite{Dri89} which precedes most of the proofs of theorem \ref{theo:reconstruction}, and seems to have been unnoticed by their respective authors.

\section{Brief survey}\label{sec:survey}

The notions of holonomy and parallel transport in geometry and physics have a long and rich history that flourished through the entire twentieth century. From a physical perspective, the first occurrence of nonintegrable phase factors in the context of field theories with gauge symmetry can be traced back to the seminal work of Dirac \cite{Dir31} on the quantization of the electric charge. The fundamental question of the physical significance of electromagnetic potentials was first addressed from both an experimental \cite{AB59} and theoretical \cite{Man62a} viewpoints. Inspired by the pioneering ideas of Yang and Mills \cite{YM54} that led to the study of non-abelian gauge theories in physics, Bia\l{}ynicki-Birula \cite{BB63}  generalized Mandelstam's work to the case of non-abelian gauge fields, and determined an analog for the Yang--Mills equations in this context. Yet another pioneering idea was proposed by Wilson \cite{Wil74}, who introduced the notion of \emph{Wilson loops} as a fundamental complete set of gauge field observables in his study of the problem of confinement. It is rather significant that at the same time that the celebrated Wu--Yang dictionary \cite{WY75} was formulated\footnote{The appearance of the Wu--Yang dictionary was a catalyst for the birth of mathematical gauge theory, identifying the physical notion of a (classical) gauge field with the geometric notion of an orbit of connections on a principal bundle, under the action of the gauge group of bundle automorphisms. However, the first mathematician who studied the relation between gauge fields and connections on principal bundles was Robert Hermann \cite{Her70, Her75, Her78}. His insights on the subject have been largely ignored due to the unorthodox nature of his academic career.} ---  leading to the standard mathematical axiomatization of gauge field theories --- Yang \cite{Yang74} also postulated an ``integral fomalism" for non-abelian gauge fields based on the idea of non-abelian phase factors, which could be considered as the first general formulation of gauge theories from the holonomy perspective (in this respect, cf. \cite{Jack78}, \cite{Dol80}, \cite{MM81}, \cite{DL82}, \cite{ChT86,ChST86a,ChST86b} for similar considerations), as well as the first attempt at the reconstruction of gauge fields in terms of their non-abelian phase factors  (see \cite{Gil81} for a more detailed proposal --- at the physicists' level of rigor --- of the same idea formulated in terms of Wilson loop observables), while leading to non-abelian generalizations of Stokes' theorem \cite{Are80}, \cite{CCR99}. In yet another interesting direction, Aref'eva and Polyakov \cite{Are79,Pol80} observed that when an ordinary gauge field $A$ is interpreted as a map $H:\Omega^{\boldsymbol{\cdot}}(M,p)\rightarrow G$ in terms of its holonomy (a so-called \emph{chiral field on loop space}) the Yang--Mills equations for $A$ correspond to the zero curvature equation for the ``potential" $(\delta H) H^{-1}$.

\emph{Lattice gauge theory} (LGT) is a physical formalism for quantum gauge theories which originated in the work of Wilson \cite{Wil74}. Its basic idea is to replace a gauge field by a collection of ``parallel transporters" (elements in $G$) over a discrete set of paths in the base manifold (usually associated to a lattice in $\TT^{n}$), which play the role of a discretization of the former.  Although LGT is sometimes described as a pragmatic mechanism to model theoretical predictions (at the quantum level) that can be tested experimentally \cite{Creu83}, the insights it offers motivate the very appealing possibility of constructing quantum gauge theories by means of rigorously defined gauge field discretization mechanisms and continuum limits \cite{OS78}. 

All of the previously described ideas lead to many interesting mathematical questions. The differential geometry of connections on principal bundles \cite{KN63}, has ever since been considered as the standard approach towards a solid mathematical foundation of the subject. In this respect, the study of connections on principal bundles in terms of their holonomy seemed to have been motivated by topological questions in the study of fibre bundle theory \cite{Tel60,Tel63}, \cite{Milnor56}, \cite{Lash56, CL58}, \cite{Stash74,Stash12}, \cite{CM77}, while the first purely geometric aspects were considered by Kobayashi \cite{Koba54}, who posed the question of reconstruction of a connection on a principal bundle in terms of its holonomies for the first time.\footnote{This should be contrasted with the work of Narasimhan and Ramanan \cite{NR61}, who looked for an alternative reconstruction scheme in terms of a universal connection on the universal bundle of a Lie group $G$, which was analogous in spirit to the topological trends of that time.} Yet another fundamental mathematical precedent in the study of the differential geometry of loop spaces was initiated in the work of Chen \cite{Chen73,Chen77} where the idea of \emph{iterated integrals} is introduced.\footnote{The literature on Chen's iterated integrals is incredibly vast as the subject encountered many applications in both physics and mathematics; we only refer to \cite{Tav94}, \cite{HL10}, and \cite{Sta11} for subsequent developments that are directly relevant for the purposes of this work.}

Regarding topological aspects in the holonomy formulation of gauge theories, the work of L\"uscher \cite{Luscher82} initiated a series of investigations on the reconstruction of topological charges for a principal $G$-bundle purely in terms of LGT data \cite{Phil85,PS86,PS90,PS93}, which relies on a generic interpolation mechanism in the space of LGT data. A different account for the understanding of these problems is presented in \cite{MZ17,MZ19}, \label{sec:thin-not-ret} which addresses directly the characterization of the topology of principal $G$-bundles within the holonomy approach.

The holonomy formulation of gauge theory in physics experienced another boost in the 1980s due to several attempts to establish a theory of quantum gravity within the framework of representation theory for groups of based loops.\footnote{In this respect, the survey works \cite{Loll92,Loll94a,Loll94b} present a short summary of motivations, results and goals of the discipline, while the book \cite{GP96} is a rather comprehensive account of the developments of the subject in those decades.} From a more mathematical perspective, and building up on ideas of Gambini and Trias \cite{GT80,GT81,GT83,GT86,FGT85,GLT89,Gam91}, the school of Ashtekar initiated a programme that led to the development of the subject within the context of $C^{*}$-algebras \cite{AI92,AL93}, \emph{generalized connections} \cite{AL94,AL95}, and \emph{generalized measures} on the spaces $\cA_{P}/\cG_{P}$ of gauge orbits of smooth connections of a principal $G$-bundle $P$ \cite{Baez94,AMM94,MM95,BS97,Tlas14}. Generalized connections require to consider a suitable completion of $\cA_{P}/\cG_{P}$ (a rigorous mathematical treatment of generalized connections has also been given in \cite{Flei00,Flei03}). 
The study of the natural topological and differential geometric aspects of group structures on spaces of based loops on a smooth manifold, and their implications in the study of gauge field theories through holonomy, led to the introduction of the so-called \emph{extended loop groups} \cite{DiBGG93,DiBGGP94} and their generalized loop representations (cf. \cite{Schi96}, and the technical problems that this approach seemed to have unearthed). 

All of the previous developments provided many insights that paved the way towards a rigorous mathematical formulation of the reconstruction theorem 
of gauge fields in terms of their holonomy homomorphisms from a group of based loops on a manifold to a Lie group $G$. Different proofs of the reconstruction theorem appeared in \cite{Bar89,Bar91},  \cite{Lew93}, \cite{Haj93}, \cite{CP94}, leading to subsequent mathematical developments of the notion of groups of based loops \cite{Fulp94}, \cite{Gib97}, \cite{Wood97}, \cite{MP02}, \cite{SW09, CLW16}, \cite{Tlas16}. Somewhat independent, but equally important, are the works of Gross \cite{Gro85}, who gave an analytic proof of the equivalence between the Yang--Mills and Mandelstam--Bia\l{}ynicki-Birula equations, and Morrison \cite{Mor91}, who gave a characterization of connections on principal bundles over oriented Riemannian surfaces satisfying the Yang--Mills equations in terms of their corresponding holonomies. 
Regarding the differential geometry of loop spaces, more recently Stacey \cite{Sta09, Sta17} has studied the problem of defining manifold structures on spaces of piecewise-smooth based loops on a manifold, in the sense of the seminal work of Kriegl and Michor \cite{KM97}.\footnote{It is very tempting to inquire if the latter ideas and analytic techniques could be implemented to build a rigorous ``differential calculus" for smooth holonomy homomorphisms, that could serve as an alternative to the standard geometric analysis approach to mathematical gauge theory modeled on the quotient spaces $\cA_{P}/\cG_{P}$ (cf. \cite{Tav94}).}

\section{Thin homotopy and groups of based loops on a manifold}\label{sec:thin}

Let $M$ be a smooth connected manifold, on which a point $p\in M$ is fixed once and for all. A \emph{piecewise-smooth loop based at $p$} is a continuous map $\gamma:[0,1]\rightarrow M$ such that $\gamma(0)=\gamma(1)=p$, and for some finite subset $0=t_{0}<\dots<t_{r}=1$, $\gamma|_{[t_{i-1},t_{i}]}$ is smooth $\forall\; i=1,\dots,r$. Piecewise-analytic based loops are defined analogously.

Let $\Omega^{0}(M,p)$ denote the set of continuous loops in $M$ based at $p$. $\Omega^{0}(M,p)$ admits several natural function space topologies, the most common of which is the compact-open topology. Under such an intrinsic topological space structure, several important subspaces of $\Omega^{0}(M,p)$ arise if we impose additional smoothness constraints, i.e., the subspaces $\Omega^{\text{ps}}(M,p)\supsetneq\Omega^{\text{pa}}(M,p)$ of piecewise-smooth and piecewise-analytic loops, the subspaces $\Omega^{\infty}(M,p)\supsetneq\Omega^{\omega}(M,p)$ of smooth and analytic loops, etc.

$\Omega^{\boldsymbol{\cdot}}(M,p)$ will denote in general any subspace of $\Omega^{0}(M,p)$ as in the examples above. A homotopy between two elements $\gamma_{0},\gamma_{1}\in\Omega^{\boldsymbol{\cdot}}(M,p)$ will then be understood as a continuous map $\eta:[0,1]^{2}\rightarrow M$ such that $\eta(s,\cdot)\in\Omega^{\boldsymbol{\cdot}}(M,p)$ $\forall\; s\in[0,1]$, $\eta(i,\cdot)= \gamma_{i}$ for $i=0,1$, and that also satisfies the defining property of $\Omega^{\boldsymbol{\cdot}}(M,p)$, i.e., is smooth, piecewise-smooth on a polygonal paving of $[0,1]^{2}$, etc. By a slight abuse of notation, we will denote the trivial loop at $p$ simply by $p$.

The standard concatenation of an ordered pair of based loops $(\gamma,\gamma')$ in $\Omega^{0}(M,p)$, defined as
\[
\gamma'\cdot\gamma(t)= \left\{
\begin{array}{cl}
\gamma(2t) & \quad 0\leq t\leq 1/2,\\\\
\gamma'(2t-1) & \quad 1/2\leq t\leq 1,
\end{array}
\right.
\]
determines a binary operation on $\Omega^{0}(M,p)$, which together with the operation of loop inversion $\gamma^{-1}(t):=\gamma(1-t)$, descends into a group structure in the space of equivalence classes $\pi_{1}(M,p) := \Omega^{0}(M,p)/\sim_{H}$ under homotopy of based loops, giving rise to the fundamental group of $M$.  However, the fundamental group is just one example of a group structure induced on a space of equivalence classes in $\Omega^{0}(M,p)$ under a suitable homotopy relation, which is also the coarsest in a sense that will be made precise in definition \ref{def:group-like}. 
For this purpose, we will call a subspace $\Omega^{\boldsymbol{\cdot}}(M,p)\subset \Omega^{0}(M,p)$ \emph{concatenable} if concatenation of loops also defines a binary operation on it.  $\Omega^{\text{ps}}(M,p)$ and $\Omega^{\text{pa}}(M,p)$ are examples of concatenable subspaces, while $\Omega^{\infty}(M,p)$ and $\Omega^{\omega}(M,p)$ are not. This way, it is also possible to induce a group structure on different orbit spaces in a concatenable subspace with respect to weaker homotopy equivalence relations $\sim_{R}$.

\begin{definition}\label{def:group-like}
A \emph{group-like homotopy equivalence relation} in a concatenable subspace $\Omega^{\boldsymbol{\cdot}}(M,p)\subset \Omega^{0}(M,p)$ is an equivalence relation $\sim_{R}$ turning loop concatenation and inversion into a group epimorphism
\[
\Omega^{\boldsymbol{\cdot}}(M,p)/\sim_{R}\rightarrow \pi_{1}(M,p)\rightarrow e.
\]
\end{definition}

Several group-like homotopy equivalence relations are relevant to us, and are in fact implicit in the standard construction of the group structure on $\pi_{1}(M,p)$. The most relevant one for the purposes of this work --- thin homotopy --- was proposed by Barrett over the space $\Omega^{0}(M,p)$, although as done by Caetano--Picken, it suffices to define it over $\Omega^{\text{ps}}(M,p)$. 

By a \emph{generalized reparametrization} of a based loop $\gamma\in\Omega^{\text{ps}}(M,p)$, we mean a based loop of the form $\gamma\circ \phi$, where $\phi:[0,1]\rightarrow [0,1]$ is piecewise-smooth and such that $\phi(0)=0$, $\phi(1)=1$, and $0<\phi(t)<1$ if $0<t<1$. An analogous definition is given over $\gamma\in\Omega^{\text{pa}}(M,p)$. Any such a map leads to a degree 1 map $[\phi]:S^{1}\rightarrow S^{1}$ under the identification $S^{1}\cong\RR/\ZZ$, but in general the condition $\phi'(t)\neq 0$ is not imposed. By imposing suitable restrictions on $\phi$, we can think of generalized reparametrizations as maps $\cF_{\phi}:\Omega^{\boldsymbol{\cdot}}(\Omega,p)\rightarrow \Omega^{\boldsymbol{\cdot}}(\Omega,p)$. In particular, it follows that every piecewise-smooth based loop can be reparametrized into a smooth loop.

\begin{definition}[\cite{Bar91,CP94}]\label{def:thin-h}
Two based loops $\gamma_{0},\gamma_{1}\in \Omega^{0}(M,p)$ (resp.  $\Omega^{\text{ps}}(M,p)$ or $\Omega^{\text{pa}}(M,p)$) are called \emph{thin homotopic} if there is a homotopy $\eta$ between them (resp. piecewise-smooth or piecewise-analytic homotopy on a paving of $[0,1]^{2}$) such that 
\[
\eta\left([0,1]^{2}\right)\subseteq \gamma_{0}([0,1])\cup\gamma_{1}([0,1])\subset M.
\]
$\gamma_{0},\gamma_{1}$ are called \emph{thin equivalent} (denoted $\gamma_{0}\sim_{R_{\text{thin}}}\gamma_{1}$) if they are related by a finite composition of thin homotopies. A loop is \emph{thin} if it is thin equivalent to the trivial loop. 
\end{definition}

Hereafter we will mostly consider thin homotopies on $ \Omega^{\text{ps}}(M,p)$, unless stated otherwise. The simplest example of a thin homotopy on a based loop $\gamma\in\Omega^{\text{ps}}(M,p)$ is determined by any piecewise-smooth generalized reparametrization $\phi$, given by
\[
\eta(s,t)=\gamma(s\phi(t)+(1-s)t)
\]
Composition of generalized reparametrizations for the different classes of based loops leads to suitable equivalence relations in the corresponding spaces $\Omega^{\boldsymbol{\cdot}}(M,p)$, which we will call \emph{reparametrization equivalences} (denoted by $\sim_{R_{\text{rep}}}$). Even though reparametrization equivalence turns based loop concatenation into an associative structure on the set of equivalence classes $\Omega^{\text{ps}}(M,p)/\sim_{R_{\text{rep}}}$, it is still not an example of a group-like homotopy equivalence relation in $\Omega^{\text{ps}}(M,p)$.

Yet another example of thin homotopy, which suffices to define a group-like homotopy relation in $\Omega^{\text{ps}}(M,p)$, is that of a \emph{retracing}, which is a homotopy between a based loop of the form $\gamma'=\alpha_{2}\cdot\beta^{-1}\cdot\beta\cdot\alpha_{1}$ and the reduced form $\gamma=\alpha_{2}\cdot\alpha_{1}$, for any piecewise-smooth paths $\alpha_{1}$,$\alpha_{2}$, and $\beta$ in $M$ such that $\alpha_{1}(0)=\alpha_{2}(1)=p$ and $\alpha_{1}(1)= \alpha_{2}(0)=\beta(0)$.\footnote{The notion of loop retracing is due to Kobayashi \cite{Koba54}, who used the terms \emph{allongement} and \emph{accourcissement} for the operations transforming $\gamma$ into $\gamma'$ and viceversa (cf. \cite{Tel60}).} Such a homotopy is defined in terms of the standard homotopy between the loop $\beta^{-1}\cdot\beta$ and the constant loop at $\beta(0)$ given by $\eta(s,\cdot) =\beta^{-1}_{s}\cdot \beta_{s}$, where $\beta_{s}(t)=\beta((1-s)t)$.\footnote{Path concatenation is an associative operation only up to piecewise-smooth reparametrization. Hence the operation of retracing would be an unambiguously defined equivalence relation only over the space of equivalence classes in $\Omega^{\text{ps}}(M,p)$ under generalized reparametrization.} 

\begin{definition}[\cite{Koba54,Tel60,Lew93,Gib97,Spa01}]\label{def:gen-h}
Two based loops  $\gamma_{0},\gamma_{1}\in \Omega^{\text{ps}}(M,p)$ are called \emph{retrace equivalent} (denoted $\gamma_{0}\sim_{R_{\text{ret}}}\gamma_{1}$) if they differ by a finite number of retracings under implicit generalized reparametrization.\footnote{The weakening of reparametrizations is not necessary to define retrace equivalence. The justification for this weakening is discussed in remarks \ref{rem:ret->G} and \ref{rem:thin->rank-1}.} A based loop is \emph{retraceable} if it is retrace equivalent to the trivial loop.
\end{definition}

The \emph{holonomy} of a smooth connection $A$ \cite{KN63} on a smooth principal $G$-bundle $\pi:P\rightarrow M$ is a geometric notion associated to any piecewise-smooth loop in $M$ and a choice of base point $b\in\pi^{-1}(p)$. For any $A\in\cA_{P}$ we have an induced map
\[
\mathrm{Hol}_{A}:\Omega^{\text{ps}}(M,p)\rightarrow G,\qquad \gamma\mapsto \mathrm{Hol}_{A}(\gamma).
\]

\begin{definition}[\cite{Ash90,Lew93,Spa01}]\label{def:hoop-h}
Given a Lie group $G$, two based loops  $\gamma_{0},\gamma_{1}\in \Omega^{\text{ps}}(M,p)$ are \emph{$G$-equivalent} (denoted $\gamma_{0}\sim_{R_{G}}\gamma_{1}$) if $\mathrm{Hol}_{A}(\gamma_{0})= \mathrm{Hol}_{A}(\gamma_{1})$ for any smooth connection $A\in\cA_{P}$ on any smooth principal $G$-bundle $P\rightarrow M$. 
\end{definition}

\begin{remark}\label{rem:ret->G}
It is a standard fact that the holonomy of any smooth connection is an invariant under generalized reparametrizations and retracings of any given piecewise-smooth based loop. Hence we have the implication
\[
\mbox{Retrace equivalent} \quad\Rightarrow\quad \mbox{$G$-equivalent}
\]
This is one of several motivations for considering generalized reparametrizations $\phi:[0,1]\rightarrow[0,1]$ that may not be homeomorphisms (see remark \ref{rem:thin->rank-1}). 
Even though in general $G$-equivalence also depends on the Lie group $G$ by definition, in the case when $G$ is a connected and non-solvable matrix group retrace equivalence leads to strictly smaller classes (see sections \ref{sec:thin-not-ret} and \ref{sec:Tlas}). 
\end{remark}

Caetano--Picken defined yet another concatenable space of interest. A smooth based loop $\gamma$ is said to have \emph{sitting instants at $p$} if there is $0<\varepsilon \leq 1/2$ such that $\gamma([0,\varepsilon]\cup[1-\varepsilon,1])=p$. We will denote by $\Omega^{\infty}_{0}(M,p)\subset\Omega^{\infty}(M,p)$ the subspace of smooth loops with sitting instants at $p$.

\begin{definition}[\cite{CP94}]\label{def:intim-h}
Two based loops $\gamma_{0},\gamma_{1}\in\Omega^{\infty}_{0}(M,p)$ are \emph{intimate} (denoted $\gamma_{0}\sim_{R_{\text{int}}}\gamma_{1}$) if there is a smooth homotopy $\eta:[0,1]^{2}\rightarrow M$ such that 
\begin{itemize}
\item[(i)] $\mathrm{rk}\left(d\eta_{(s,t)}\right)\leq 1 \quad \forall\; (s,t)\in[0,1]^{2}$ (rank-1 homotopy condition),
\item[(ii)] there is $0<\varepsilon<1/2$ such that $\forall\; s\in[0,1]$, $\eta(s,\cdot)\in\Omega^{\infty}_{0}(M,p)$ with respect to $\varepsilon$, and moreover $\eta(s,\cdot)=\gamma_{0}$ if $s\in[0,\varepsilon]$, and $\eta(s,\cdot)=\gamma_{1}$ if $s\in[1-\varepsilon,1]$.
\end{itemize} 
\end{definition}

\begin{proposition}[\cite{CP94}]\label{prop:rep-smooth}
For any given $\gamma\in\Omega^{\text{\emph{ps}}}(M,p)$ there exists a generalized reparametrization $\phi:[0,1]\rightarrow[0,1]$ such that $\gamma\circ\phi \in \Omega^{\infty}_{0}(M,p)$. 
\end{proposition}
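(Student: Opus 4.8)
The plan is to construct $\phi$ explicitly, as a finite concatenation of smooth pieces --- one constant ``plateau'' over each breakpoint of $\gamma$, glued by rescaled copies of the classical smooth step function. Fix a partition $0=t_{0}<t_{1}<\dots<t_{r}=1$ adapted to $\gamma$, so that each $\gamma|_{[t_{i-1},t_{i}]}$ is the restriction of a smooth map $\widetilde{\gamma}_{i}$ defined on an open neighbourhood of $[t_{i-1},t_{i}]$; such a partition exists by the very definition of $\Omega^{\text{ps}}(M,p)$. The only external ingredient I need is the classical fact that there is a smooth non-decreasing function $\psi\colon\RR\to[0,1]$ with $\psi\equiv 0$ on $(-\infty,0]$ and $\psi\equiv 1$ on $[1,\infty)$, obtained in the usual way from $x\mapsto\exp(-1/x)$. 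Rescaling and translating $\psi$ yields, for any $a<b$ and $c\le d$, a smooth non-decreasing surjection $[a,b]\to[c,d]$ that is identically $c$ near $a$ and identically $d$ near $b$.

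Next I would pick $0<\varepsilon_{0}\le 1/2$ together with pairwise disjoint closed intervals $J_{0},J_{1},\dots,J_{r}$, arranged in increasing order inside $[0,1]$ with a nonempty gap between consecutive ones and with $J_{0}=[0,\varepsilon_{0}]$, $J_{r}=[1-\varepsilon_{0},1]$, and let $K_{i}$ denote the closed interval lying between $J_{i-1}$ and $J_{i}$ for $i=1,\dots,r$. Define $\phi$ to equal the constant $t_{i}$ on $J_{i}$, and on $K_{i}$ to be one of the rescaled step functions above, mapping $K_{i}$ onto $[t_{i-1},t_{i}]$, equal to $t_{i-1}$ near the left endpoint of $K_{i}$ and to $t_{i}$ near its right endpoint. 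Then $\phi\colon[0,1]\to[0,1]$ is continuous, non-decreasing and surjective, $\phi(0)=t_{0}=0$, $\phi(1)=t_{r}=1$, and --- since each of its finitely many junction points lies in an interval on which $\phi$ is locally constant --- $\phi$ is $C^{\infty}$ on $[0,1]$. Thus $\phi$ is a generalized reparametrization.

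It then remains to verify that $\gamma\circ\phi\in\Omega^{\infty}_{0}(M,p)$. On each $J_{i}$ the composition is the constant $\gamma(t_{i})$; on each $K_{i}$ it equals $\widetilde{\gamma}_{i}\circ\phi$, because $\phi(K_{i})\subseteq[t_{i-1},t_{i}]$ and $\gamma$ agrees with $\widetilde{\gamma}_{i}$ there, hence it is $C^{\infty}$ on $K_{i}$; and at each junction point $\gamma\circ\phi$ is locally constant, hence $C^{\infty}$. Therefore $\gamma\circ\phi$ is $C^{\infty}$ on all of $[0,1]$. Finally $\gamma\circ\phi(0)=\gamma(0)=p=\gamma(1)=\gamma\circ\phi(1)$, and since $\phi\equiv 0$ on $[0,\varepsilon_{0}]$ and $\phi\equiv 1$ on $[1-\varepsilon_{0},1]$ we obtain $\gamma\circ\phi\equiv p$ on both $[0,\varepsilon_{0}]$ and $[1-\varepsilon_{0},1]$, i.e.\ $\gamma\circ\phi$ has sitting instants at $p$ with respect to $\varepsilon_{0}$, as required.

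The argument is purely constructive and has no genuine obstacle; the single point requiring attention is the smoothness of $\gamma\circ\phi$ across the finitely many gluing points, which is precisely why I would take the building blocks locally constant --- rather than merely flat to infinite order --- near their endpoints, so that $\phi$, and with it $\gamma\circ\phi$, is manifestly locally constant there. (A small bookkeeping remark: the end plateaus $J_{0}$ and $J_{r}$ force $\phi$ to vanish, resp.\ to equal $1$, on a neighbourhood of the corresponding endpoint of $[0,1]$, which is exactly the kind of degeneracy the adjective ``generalized'' is meant to allow.)
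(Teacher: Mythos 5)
Your construction is correct, and since the paper does not actually prove Proposition \ref{prop:rep-smooth} --- it is imported from \cite{CP94} --- there is no in-text argument to compare against; what you give is the standard proof, built on exactly the device (a monotone step-function reparametrization that is locally constant near every point mapping to a breakpoint and near the endpoints) that \cite{CP94} and the rest of section \ref{sec:thin} implicitly rely on. The steps check out: $\phi$ is globally $C^{\infty}$ because each junction lies in a region where $\phi$ is locally constant, on each $K_{i}$ the composite is $\widetilde{\gamma}_{i}\circ\phi$ and hence smooth, on each $J_{i}$ it is constant, and the end plateaus produce the sitting instants.

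One caveat, which concerns the paper's wording rather than your argument: the definition of generalized reparametrization in section \ref{sec:thin} requires $0<\phi(t)<1$ for $0<t<1$, and your $\phi$ violates this on $(0,\varepsilon_{0}]$ and $[1-\varepsilon_{0},1)$. This violation is unavoidable, not a defect of your construction: if $\phi$ is continuous with $\phi(0)=0$ and $\phi(t)>0$ for $t>0$, then $\phi((0,\varepsilon])$ is an interval accumulating at $0$ but not containing it, so $\gamma\circ\phi\equiv p$ on $[0,\varepsilon]$ would force $\gamma\equiv p$ on some interval $(0,c)$; hence for any loop leaving $p$ immediately, no reparametrization obeying the strict inequality can create sitting instants. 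So your closing remark is right in spirit --- the proposition only makes sense if endpoint plateaus are admitted --- but you should flag explicitly that this means relaxing (or reading loosely) the stated condition $0<\phi(t)<1$, rather than attributing the allowance to the adjective ``generalized'', which in this paper officially refers only to permitting $\phi'=0$.
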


\begin{remark}\label{rem:thin->rank-1}
Every thin homotopy $\eta$ for any given pair $\gamma_{0},\gamma_{1}\in\Omega^{\text{ps}}(M,p)$ is a particular case of a rank-1 homotopy \cite{CP94}.  It follows that for every thin homotopy $\eta$, there exists a smooth map $\psi:[0,1]^{2}\rightarrow [0,1]^{2}$ such that $\psi(s,\cdot )$ is a generalized reparametrization $\forall\;s\in[0,1]$, $\eta\circ\psi(s,\cdot )\in \Omega^{\infty}_{0}(M,p)$ $\forall\; s\in[0,1]$, and condition (ii) in definition \ref{def:intim-h} is satisfied, i.e., every thin homotopy in $\Omega^{\text{ps}}(M,p)$ can be deformed into an intimate homotopy in $\Omega^{\infty}_{0}(M,p)$ through a family of generalized reparametrizations. In particular, it follows that every thin loop in $\Omega^{\infty}_{0}(M,p)$ is intimate to the trivial loop. Thus, the equivalence relation $\sim_{R_{\text{int}}}$ can be thought of as a ``taming" of thin equivalence in $\Omega^{\text{ps}}(M,p)$ in terms of reparametrization equivalence. Intimacy can also be defined on spaces of based maps $\gamma:[0,1]^{k}\to M$, leading to generalizations of higher homotopy groups \cite{CP98, MP10}.
\end{remark}

The proof of the following result is relatively straightforward to verify. Its proof is also justified in each of the original articles that introduced each corresponding equivalence relation.

\begin{proposition}\label{prop:group-like}
The equivalence relations introduced in definitions \ref{def:thin-h}, \ref{def:gen-h}, \ref{def:hoop-h} and \ref{def:intim-h} are examples of group-like homotopy equivalence relations on their corresponding concatenable subspaces.
\end{proposition}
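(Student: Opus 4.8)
The plan is to check, for each of the four relations $\sim_R$ on its concatenable domain $\Omega^{\boldsymbol{\cdot}}(M,p)$, the following three points: (a) $\sim_R$ is a congruence for loop concatenation and inversion, so that these operations descend to $Q := \Omega^{\boldsymbol{\cdot}}(M,p)/\sim_R$; (b) $Q$ is a group with unit $[p]$ and $[\gamma]^{-1} = [\gamma^{-1}]$; (c) $\sim_R$ refines ordinary based homotopy $\sim_H$, so the tautological map $Q \to \pi_1(M,p)$ is a well-defined group homomorphism. Together with its surjectivity, (c) makes this map the epimorphism required by Definition \ref{def:group-like}. Surjectivity is the cheapest point: every class in $\pi_1(M,p)$ has a representative in any of the relevant subspaces ($\Omega^{\text{ps}}$, $\Omega^{\text{pa}}$, or $\Omega^{\infty}_{0}$) by standard smoothing together with Proposition \ref{prop:rep-smooth} for the sitting-instant case, and reparametrization does not change the homotopy class.

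For the three ``homotopy-type'' relations $\sim_{R_{\text{thin}}}$, $\sim_{R_{\text{ret}}}$, $\sim_{R_{\text{int}}}$, the congruence property (a) is obtained by gluing the witnessing homotopies: from homotopies $\eta_0$ (between $\gamma_0,\gamma_0'$) and $\eta_1$ (between $\gamma_1,\gamma_1'$) of the given type one forms $\eta(s,\cdot) := \eta_1(s,\cdot)\cdot\eta_0(s,\cdot)$, whose image is contained in $\eta_0([0,1]^2)\cup\eta_1([0,1]^2)$ --- yielding the thin and retrace containment conditions --- and which inherits the relevant rank and smoothness/paving constraints since these are preserved under concatenation in the $t$-variable; inversion is handled by the substitution $t\mapsto 1-t$. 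For $\sim_{R_{\text{int}}}$ one additionally uses that the sitting-instant hypotheses on the slices $\eta_i(s,\cdot)$ make the concatenation smooth across $t=1/2$ with a uniform $\varepsilon$ (take the smaller of the two), and that the rank-$\le 1$ condition is local and hence inherited. The group axioms (b) then reduce, in all three cases, to the two facts already recorded in the text, that generalized reparametrizations and retracings are thin homotopies: associativity on $Q$ and the unit laws $[p]\cdot[\gamma] = [\gamma] = [\gamma]\cdot[p]$ follow because the two bracketings of $\gamma_2\cdot\gamma_1\cdot\gamma_0$, and each of $p\cdot\gamma$ and $\gamma\cdot p$ relative to $\gamma$, differ by a (generalized) reparametrization $\phi$, hence by the thin homotopy $\eta(s,t) = \gamma(s\phi(t)+(1-s)t)$ --- with this reparametrization upgraded to an intimate homotopy via Proposition \ref{prop:rep-smooth} and Remark \ref{rem:thin->rank-1} in the $\sim_{R_{\text{int}}}$ case --- while the inverse law $[\gamma^{-1}]\cdot[\gamma] = [p] = [\gamma]\cdot[\gamma^{-1}]$ holds because $\gamma\cdot\gamma^{-1}$ and $\gamma^{-1}\cdot\gamma$ are retracings of the trivial loop (with $\alpha_1=\alpha_2=p$ and $\beta$ one of $\gamma,\gamma^{-1}$), hence thin, hence intimate to $p$ by Remark \ref{rem:thin->rank-1}.

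For the holonomy relation $\sim_{R_G}$, (a) and (b) are instead immediate from the elementary behaviour of parallel transport: for every smooth connection $A$ on every principal $G$-bundle over $M$, $\mathrm{Hol}_A$ carries loop concatenation to multiplication in $G$, loop inversion to inversion in $G$, the trivial loop to $e$, and is invariant under generalized reparametrizations and retracings; so $\sim_{R_G}$ is a congruence and $Q$ inherits a group structure with unit $[p]$ and with $[\gamma^{-1}]$ inverse to $[\gamma]$. It then remains to check (c) for all four relations: this is tautological for $\sim_{R_{\text{thin}}}$, since a thin homotopy is a homotopy, and follows for $\sim_{R_{\text{ret}}}$ and $\sim_{R_{\text{int}}}$ because retracings and intimate homotopies are thin. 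The one genuinely non-formal point, which I expect to be the main obstacle, is that $\sim_{R_G}$ refines $\sim_H$: given $\gamma_0\not\sim_H\gamma_1$ one must exhibit a smooth $G$-connection separating their holonomies. For this one uses flat connections --- a representation $\rho:\pi_1(M,p)\to G$ is realized (up to the usual conventions) as the holonomy of the canonical flat connection on $\widetilde M\times_\rho G$ --- so $\sim_{R_G}$ refines $\sim_H$ precisely when $\Hom(\pi_1(M,p),G)$ separates the points of $\pi_1(M,p)$. This holds for the connected non-solvable Lie groups that are the objects of interest here, and is the implicit standing hypothesis on $G$ behind the statement; pinning down the exact class of pairs $(M,G)$ for which it holds, and hence the precise scope of the assertion for $\sim_{R_G}$, is the only place in this proof where something beyond routine bookkeeping with homotopies, pavings and the multiplicativity of holonomy is involved.
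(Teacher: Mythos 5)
Your scaffolding (a) congruence, (b) group axioms via reparametrizations and retracings, (c) refinement of based homotopy, and your verifications for $\sim_{R_{\text{thin}}}$, $\sim_{R_{\text{ret}}}$ and $\sim_{R_{\text{int}}}$ are fine; this is exactly the routine checking that the paper does not spell out but delegates to the original references, and your appeal to proposition \ref{prop:rep-smooth} and remark \ref{rem:thin->rank-1} to upgrade reparametrization homotopies to intimate ones is the paper-sanctioned way to handle the smoothness technicalities in the $\Omega^{\infty}_{0}$ case. The genuine gap is in step (c) for $\sim_{R_{G}}$, which you correctly single out as the non-formal point but then resolve incorrectly. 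Your claim that flat connections settle it because ``$\Hom(\pi_{1}(M,p),G)$ separates the points of $\pi_{1}(M,p)$ \dots for connected non-solvable $G$'' is false: separation is a condition on $\pi_{1}(M)$ as much as on $G$. If homomorphisms into a linear Lie group separate points of a finitely generated group, that group is residually finite (Malcev), and there are closed manifolds (any finitely presented group is the fundamental group of a closed $4$-manifold) whose fundamental group is not residually finite, or even admits no nontrivial finite-dimensional representation at all (Higman's group); for such $M$ every flat $G$-connection, $G$ semisimple, has holonomy that separates nothing. Your ``precisely when'' also fails in the other direction: holonomies of non-flat connections are not homotopy invariants, so the inadequacy of flat connections does not obstruct the refinement. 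The route actually available inside the paper is theorem \ref{theo:Tlas} together with remark \ref{rem:equiv-thin-int}: for semisimple (more generally non-solvable, cf.\ \cite{Spa01}) $G$, $G$-equivalence on $\Omega^{\text{ps}}(M,p)$ coincides with thin equivalence, hence refines based homotopy for every $M$ with no hypothesis on $\pi_{1}(M)$ --- but this is a deep input, not the flat-connection argument you propose.

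Your instinct that the scope of the assertion for $\sim_{R_{G}}$ needs pinning down is, however, well founded, and worth making explicit rather than deferring: for abelian (more generally solvable) $G$, multiplicativity of holonomy under concatenation makes every commutator of based loops $G$-equivalent to the trivial loop, so $\cL^{G}(M,p)$ is abelian and cannot admit any epimorphism onto a non-abelian $\pi_{1}(M,p)$ (e.g.\ $M$ a genus-$2$ surface, $G=U(1)$). So the proposition, read literally for definition \ref{def:hoop-h} with arbitrary $G$, fails, and must be understood with the restriction on $G$ hinted at in remark \ref{rem:ret->G}; within that restricted scope the correct proof of (c) is the holonomy-theoretic one above, not separation by flat connections.
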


Consequently, there is an induced topological group structure on the quotients
\begin{align*}
& \cL^{\text{thin}}(M,p):=\Omega^{\text{ps}}(M,p)/\sim_{R_{\text{thin}}},\qquad & \cL^{\text{ret}}(M,p):=\Omega^{\text{ps}}(M,p)/\sim_{R_{\text{ret}}},\\\\
& \cL^{G}(M,p):=\Omega^{\text{ps}}(M,p)/\sim_{R_{G}},\qquad &\cL^{\text{int}}(M,p):=\Omega^{\infty}_{0}(M,p)/\sim_{R_{\text{int}}}.
\end{align*}
When no distinction is necessary, we will refer to any of these groups as $\cL^{\boldsymbol{\cdot}}(M,p)$.\footnote{We will refer to these topological groups generically as \emph{groups of based loops} in $M$. In the physics literature, the group $\cL^{G}(M,p)$ is commonly referred to as the \emph{group of hoops} for the pair $(M,G)$.}

Generalized reparametrizations and retracings of piecewise-smooth based loops are particular examples of thin homotopies in $\Omega^{\text{ps}}(M,p)$. It follows from remark \ref{rem:thin->rank-1} that we have two induced group epimorphisms
\[
\Phi:\cL^{\text{ret}}(M,p)\rightarrow \cL^{\text{thin}}(M,p),\qquad \Psi: \cL^{\text{thin}}(M,p)\rightarrow \cL^{\text{int}}(M,p)
\]
Taking several inclusions and projections into account, these group epimorphisms can be arranged into the following commutative diagram 

\[
\xymatrixrowsep{.6cm}
\xymatrixcolsep{0cm}
\xymatrix{
\Omega^{\infty}_{0}(M,p) \ar@{}[ddd]^(.17){}="a"^(.83){}="b" \ar "a";"b"\ar@{}[rd]^(.25){}="a"^(.75){}="b" \ar "a";"b" \ar@{}[rrrr]^(.15){}="a"^(.85){}="b" \ar "a";"b" & & & & \Omega^{\text{ps}}(M,p)\ar@{}[dl]^(.25){}="a"^(.75){}="b" \ar "a";"b"\ar@{}[ddd]^(.17){}="a"^(.83){}="b" \ar "a";"b" \\
&\Omega^{\infty}_{0}(M,p)/\sim_{R_{\text{rep}}}\ar@{}[ddl]^(.25){}="a"^(.75){}="b" \ar "a";"b"&\cong&\Omega^{\text{ps}}(M,p)/\sim_{R_{\text{rep}}}\ar@{}[d]^(.25){}="a"^(.75){}="b" \ar "a";"b" &\\
&&&\cL^{\text{ret}}(M,p) \ar@{}[dr]^(.3){}="a"^(.7){}="b" \ar "a";"b"&\\
\cL^{\text{int}}(M,p) & & & & \cL^{\text{thin}}(M,p) \ar@{}[llll]^(.15){}="a"^(.85){}="b" \ar "a";"b"
}
\]

\begin{definition}
A \emph{$G$-holonomy relation} in $\Omega^{\boldsymbol{\cdot}}(M,p)\subset\Omega^{\text{ps}}(M,p)$ is any equiva\-lence relation $\sim_{R}$ for which the holonomy of any smooth connection on any smooth principal $G$-bundle is an invariant on equivalence classes. 
\end{definition}

\begin{proposition}[\cite{Koba54, CP94}]
Retrace equivalence in $\Omega^{\emph{ps}}(M,p)$ (definition \ref{def:gen-h}) and intimacy in $\Omega^{\infty}_{0}(M,p)$ (definition \ref{def:intim-h}) are examples of $G$-holonomy relations for any Lie group $G$.
\end{proposition}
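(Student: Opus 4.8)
The plan is to handle the two relations separately, since the retrace case is essentially formal while the intimacy case needs one genuine differential-geometric input. For retrace equivalence I would first record the three standard properties of the parallel transport map $\tau^A_\gamma:P_{\gamma(0)}\to P_{\gamma(1)}$ attached to a piecewise-smooth path $\gamma$ and a smooth connection $A$ on $P\to M$: (a) invariance under generalized reparametrization, $\tau^A_{\gamma\circ\phi}=\tau^A_\gamma$ — in a local trivialization the transport ODE for $\gamma\circ\phi$ is $g^{-1}\dot g=-\phi'(t)\,\mathcal{A}(\dot\gamma(\phi(t)))$, whose solution from the identity is $t\mapsto\tilde g(\phi(t))$ with $\tilde g$ the transport along $\gamma$, so the endpoint value $\tilde g(1)$ is unchanged (a stall $\phi'(t)=0$ merely produces a locally constant solution, and $\phi$ need not even be monotone); (b) functoriality, $\tau^A_{\gamma_2\cdot\gamma_1}=\tau^A_{\gamma_2}\circ\tau^A_{\gamma_1}$ when $\gamma_1(1)=\gamma_2(0)$; and (c) $\tau^A_{\gamma^{-1}}=(\tau^A_\gamma)^{-1}$. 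Granting these, a single retracing is disposed of by
\[
\tau^A_{\alpha_2\cdot\beta^{-1}\cdot\beta\cdot\alpha_1}=\tau^A_{\alpha_2}\circ(\tau^A_\beta)^{-1}\circ\tau^A_\beta\circ\tau^A_{\alpha_1}=\tau^A_{\alpha_2}\circ\tau^A_{\alpha_1}=\tau^A_{\alpha_2\cdot\alpha_1},
\]
and since for loops based at $p$, once a reference point $u_{0}\in P_{p}$ is fixed, $\tau^A_\gamma$ carries exactly the same information as $\mathrm{Hol}_{A}(\gamma)\in G$, an induction on the number of retracings and implicit reparametrizations shows $\gamma_0\sim_{R_{\text{ret}}}\gamma_1\Rightarrow\mathrm{Hol}_{A}(\gamma_0)=\mathrm{Hol}_{A}(\gamma_1)$ for every such $A$, i.e.\ $\sim_{R_{\text{ret}}}$ is a $G$-holonomy relation for every $G$.

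For intimacy, suppose $\gamma_0\sim_{R_{\text{int}}}\gamma_1$ via a smooth rank-$1$ homotopy $\eta:[0,1]^2\to M$ as in definition \ref{def:intim-h}, and fix a smooth connection $A$ on $P\to M$. I would pull everything back to the square: $\eta^*P\to[0,1]^2$ carries the connection $\eta^*A$, with curvature $F_{\eta^*A}=\eta^*F_A$. Because $\mathrm{rk}\left(d\eta_{(s,t)}\right)\le 1$ at every point, the induced map $\Lambda^2 d\eta_{(s,t)}$ vanishes, hence $\eta^*F_A\equiv 0$; so $\eta^*A$ is flat over the simply connected base $[0,1]^2$, and its parallel transport between two fixed points depends only on the homotopy class of the path rel endpoints. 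Writing $\lambda_s(t)=(s,t)$ and $\mu_i(s)=(s,i)$, naturality of parallel transport under pullback identifies the $\eta^*A$-transport along $\lambda_s$ with the $A$-transport along $\eta\circ\lambda_s=\eta(s,\cdot)$, which by condition (ii) in definition \ref{def:intim-h} computes $\mathrm{Hol}_{A}(\gamma_0)$ for $s\in[0,\varepsilon]$ and $\mathrm{Hol}_{A}(\gamma_1)$ for $s\in[1-\varepsilon,1]$ (using the canonical identifications $(\eta^*P)_{(s,i)}=P_{\eta(s,i)}=P_p$). Since $[0,1]^2$ is simply connected the paths $\mu_1\cdot\lambda_0$ and $\lambda_1\cdot\mu_0$ are homotopic rel endpoints, so $\tau_{\mu_1}\circ\tau_{\lambda_0}=\tau_{\lambda_1}\circ\tau_{\mu_0}$ for the flat transport; and because $\eta\circ\mu_0$ and $\eta\circ\mu_1$ are the constant path at $p$, both $\tau_{\mu_0}$ and $\tau_{\mu_1}$ are the identity of $P_p$. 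Hence $\tau_{\lambda_0}=\tau_{\lambda_1}$, that is $\mathrm{Hol}_{A}(\gamma_0)=\mathrm{Hol}_{A}(\gamma_1)$, proving that $\sim_{R_{\text{int}}}$ is a $G$-holonomy relation for every $G$.

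The routine verifications here are the three properties of $\tau^A$, the pullback formula for curvature, and the homotopy-invariance of flat parallel transport over a simply connected base. The only point that genuinely requires care — and the reason the argument cannot be purely combinatorial — is the intimacy case: one must really use that a rank-$1$ map annihilates all $2$-forms (forcing $\eta^*A$ to be flat) and then carefully bookkeep the fibrewise identifications over the corners of the square to read off the two holonomies. This is exactly where the differential-geometric flatness argument proves indispensable: it applies uniformly to all rank-$1$ homotopies at once, whereas decomposing such a homotopy into a finite chain of retracings is in general obstructed (sections \ref{sec:thin-not-ret} and \ref{sec:Tlas}).
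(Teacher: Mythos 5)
Your proof is correct and follows essentially the same route as the sources the paper defers to (it cites \cite{Koba54,CP94} without reproducing an argument): invariance of parallel transport under generalized reparametrization, concatenation and inversion handles the retrace case, and the Caetano--Picken observation that a rank-$1$ homotopy pulls back the curvature to zero, so that flat parallel transport over the simply connected square forces equality of the two holonomies, handles intimacy. Both steps are carried out correctly, including the bookkeeping at the edges $\mu_0,\mu_1$ where $\eta$ is constant at $p$, so there is nothing to add.
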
 

The motivation of the notion of a $G$-holonomy relation is the following. For any such an equivalence relation, the holonomy map of any smooth connection $A\in\cA_{P}$ would descend into a group homomorphism
\[
\mathrm{Hol}_{A}: \Omega^{\boldsymbol{\cdot}}(M,p)/\sim_{R}\;\rightarrow G.
\]
The prototypical example of a $G$-holonomy relation is $G$-equivalence itself. One of the main objectives of the works \cite{Bar91,Lew93, CP94} is to show that each corresponding group-like homotopy equivalence relation is a $G$-holonomy relation, in such a way that the following reconstruction theorem holds (cf. \cite{Haj93}).

\begin{theorem}[Reconstruction of gauge fields as holonomy homomorphisms]\label{theo:reconstruction}
For any group homomorphism $H:\cL\rightarrow G$ satisfying a suitable smoothness condition on smooth families of based loops, where $\cL= \cL^{\text{\emph{thin}}}(M,p)$ \cite{Bar91}, $\cL^{\text{\emph{ret}}}(M,p)$ \cite{Lew93}, or $\cL^{\text{\emph{int}}}(M,p)$ \cite{CP94} there is a smooth principal $G$-bundle $\pi:P\rightarrow M$ and a smooth connection $A\in\cA_{P}$ such that $\mathrm{Hol}_{A}= H$ in terms of a choice $b\in\pi^{-1}(p)$.
\end{theorem}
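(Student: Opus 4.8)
The plan is to construct the required bundle $P$ directly as a quotient of a space of paths, in the manner common to \cite{Bar91,Lew93,CP94}. Write $\cL$ for whichever of $\cL^{\text{thin}}(M,p)$, $\cL^{\text{ret}}(M,p)$, $\cL^{\text{int}}(M,p)$ is under consideration, and let $\cP_{x}$ be the set of based paths from $p$ to $x$ of the regularity class underlying $\cL$ (taken with sitting instants at the endpoints in the intimate case, using Proposition~\ref{prop:rep-smooth}). Set
\[
P \;=\; \Bigl(\coprod_{x\in M}\cP_{x}\times G\Bigr)\Big/\!\sim, \qquad (\alpha,g)\sim(\alpha',g')\ \Longleftrightarrow\ g'=H\bigl([\,\alpha'^{-1}\cdot\alpha\,]\bigr)\,g,
\]
where $[\,\alpha'^{-1}\cdot\alpha\,]$ denotes the class in $\cL$ of the based loop at $p$ obtained by traversing $\alpha$ and then $\alpha'$ backwards. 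Because $H$ is a group homomorphism and retracings are trivial in each of the three groups $\cL$, this is an equivalence relation; the map $\pi\colon P\to M$ sending $[\alpha,g]$ to the endpoint of $\alpha$ is well defined, $G$ acts freely on the right via $[\alpha,g]\cdot h=[\alpha,gh]$ with orbits exactly the fibres of $\pi$, and the constant path $p\in\cP_{p}$ distinguishes a basepoint $u_{0}=[p,e]\in\pi^{-1}(p)$.

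Next I would put a smooth principal $G$-bundle structure on $P$. Cover $M$ by coordinate balls $U_{i}$; for $x\in U_{i}$ let $\sigma_{i}(x)\in\cP_{x}$ be the concatenation of a fixed path from $p$ to the centre of $U_{i}$ with the radial segment ending at $x$ (reparametrized to have sitting instants if $\cL=\cL^{\text{int}}$). Declare $\chi_{i}\colon\pi^{-1}(U_{i})\to U_{i}\times G$, $[\sigma_{i}(x),g]\mapsto(x,g)$; the transition function on $U_{i}\cap U_{j}$ is then $x\mapsto H\bigl([\,\sigma_{j}(x)^{-1}\cdot\sigma_{i}(x)\,]\bigr)$. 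The point is that $x\mapsto\sigma_{j}(x)^{-1}\cdot\sigma_{i}(x)$ is a smooth family of based loops at $p$, so the transition function is smooth by the assumed smoothness of $H$ on such families; this makes $P$ a smooth manifold and $\pi$ a smooth principal $G$-bundle.

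I would then build the connection from its parallel transport. For a piecewise-smooth path $c$ from $x$ to $y$ put $T_{c}\colon\pi^{-1}(x)\to\pi^{-1}(y)$, $T_{c}[\alpha,g]=[c\cdot\alpha,g]$. This descends to the quotient because replacing $\alpha$ by an equivalent $\alpha'$ changes $\alpha'^{-1}\cdot\alpha$ only by insertion of the retracing $c^{-1}\cdot c$; it is $G$-equivariant, multiplicative under concatenation of paths, invariant under generalized reparametrization of $c$, and --- invoking the smoothness of $H$ once more in the trivializations $\chi_{i}$ --- depends smoothly on $c$. A standard argument (as in \cite{KN63,Lew93}) then shows that such a smooth, multiplicative, reparametrization-invariant family of parallel transport maps is the parallel transport of a unique smooth connection $A\in\cA_{P}$: the horizontal subspace at $u$ is spanned by the velocities of the curves $t\mapsto T_{c|_{[0,t]}}(u)$, and smoothness of the connection $1$-form follows from that of $T$ in the local trivializations. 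Finally, for $\gamma\in\Omega^{\text{ps}}(M,p)$ one computes $T_{\gamma}(u_{0})=[\gamma,e]=[p,H([\gamma])]=u_{0}\cdot H([\gamma])$, so $\mathrm{Hol}_{A}(\gamma)=H([\gamma])$; hence $\mathrm{Hol}_{A}=H$.

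I expect the genuine difficulty to be concentrated in the smoothness statements. One must pin down the precise meaning of ``smoothness of $H$ on smooth families of based loops'' (e.g.\ via smooth maps from finite-dimensional parameter spaces into $\Omega^{\text{ps}}(M,p)$ with a suitable function-space topology), verify that the reference families $\sigma_{i}$ and the assignment $c\mapsto T_{c}$ do yield such smooth families, and then deduce smoothness of the transition functions and --- most delicately --- of the resulting connection $1$-form, which requires differentiating holonomy-type expressions with respect to the path. By contrast, the algebraic and topological bookkeeping (that $\sim$ is an equivalence relation, freeness of the $G$-action, multiplicativity and reparametrization-invariance of $T_{c}$, and the concluding holonomy computation) is routine once the conventions are fixed. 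The three cases $\cL^{\text{thin}}$, $\cL^{\text{ret}}$, $\cL^{\text{int}}$ are handled uniformly, the choice of $\cL$ entering only through the triviality of retracings used above.
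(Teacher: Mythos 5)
Your construction is essentially the standard one: the paper itself gives no proof of this theorem but defers to \cite{Bar91,Lew93,CP94}, and your path-space quotient $P=\bigl(\coprod_x\cP_x\times G\bigr)/\!\sim$ with transition functions $x\mapsto H\bigl([\sigma_j(x)^{-1}\cdot\sigma_i(x)]\bigr)$ and parallel transport $T_c[\alpha,g]=[c\cdot\alpha,g]$ is precisely the reconstruction carried out in those references. You also correctly locate the real technical content where it lies, namely in using Barrett's smoothness condition (H3) to obtain smooth transition functions and a smooth connection $1$-form from the transport maps; apart from that acknowledged step, which is the substance of the cited proofs, your outline has no gaps.
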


\begin{remark}
As pointed out by Caetano--Picken in \cite{CP94}, the definition of thin homotopy in \cite{Bar91} makes the realization of thin equivalence as a holonomy relation a nontrivial problem. The introduction of the spaces $\Omega^{\infty}_{0}(M,p)$ and the intimacy equivalence relation on them may be thought of as a sort of `gauge-fixing' condition for smooth based loops, that helps simplifying the proof of theorem \ref{theo:reconstruction}. It is implicit from the existence of an epimorphism $\Psi:\cL^{\text{thin}}(M,p)\rightarrow \cL^{\text{int}}(M,p)$ for any smooth manifold $M$, that thin equivalence of piecewise-smooth based loops should be a $G$-holonomy relation for any Lie group $G$. In this respect, the results of Tlas (\cite{Tlas16}; section \ref{sec:Tlas}) provide a strengthening of this idea since in particular they imply that $\Psi$ is actually an isomorphism.

\end{remark}

\subsection{A smooth thin loop that is not retraceable}\label{sec:thin-not-ret}

Thin homotopy and retrace homotopy are not equivalent when defined over piecewise-smooth loops, although the latter is always a particular case of the former. This is illustrated by a minimal example of a thin smooth loop $\gamma$ in $\RR^{2}$ based at the origin that is not retraceable. The example can be constructed explicitly in terms of the bump function $f:[0,1]\rightarrow [0,\infty)$ defined as
\begin{equation}\label{eq:bump}
f(t)=\left\{
\begin{array}{ll}
e^{-1/t(1-t)} & \quad \text{if}\quad t\in (0,1),\\\\
0 & \quad\text{if}\quad t=0,1.
\end{array}
\right.
\end{equation}
whose one-sided $n$th-derivatives at 0 and 1 are identically zero for any $n\in\NN$. Given any pair of sequences $\{r_{n}\}\subset (0,\infty)$, $\{\theta_{n}\} \subset[0,2\pi)$ such that 
\[
\lim\limits_{n\rightarrow \infty} r_{n}\rightarrow 0,\qquad \theta_{n}<\theta_{n+1}\quad\forall \; n\in\NN,
\]
we define the smooth loop $\gamma\in\Omega^{\infty}\left(\RR^{2},0\right)$ in a piecewise manner as
\[
\gamma|_{\left[1-2^{-n+1},1-2^{-n}\right]}:= f\circ\phi^{-1}_{n} \cdot \left(r_{n}\cos\left(\theta_{n}\right),r_{n}\sin\left(\theta_{n}\right)\right),
\]
for $n=1,2,\dots$, where
\[
\phi_{n}:[0,1]\rightarrow \left[1-2^{-n+1},1-2^{-n}\right],\qquad \phi_{n}(t):= \left(1-2^{-n}\right)t + \left(1-2^{-n+1}\right)(1-t).
 \] 
By construction, we have that $\gamma^{-1}(0) = \{1-1/2^{n} \;:\; n\in\NN\}\cup\{0,1\}$, and a thin homotopy between $\gamma$ and the trivial loop is simply given by $\eta(s,\cdot):=(1-s)\gamma$, but clearly $\gamma$ is not a reparametrization of a finite concatenation of piecewise-smooth retraceable loops  (figure \ref{fig:thin}). Notice that $\gamma$ would have bounded variation under the standard metric in $\RR^{2}$ if and only if the series $\sum_{n=1}^{\infty}r_{n}$ is convergent.

\begin{figure}[!ht]
\caption{The image of a thin smooth loop $\gamma\in \Omega^{\infty}\left(\RR^{2},0\right)$}\label{fig:thin}

\vspace*{7mm}

\centering
\includegraphics[width=0.9\textwidth]{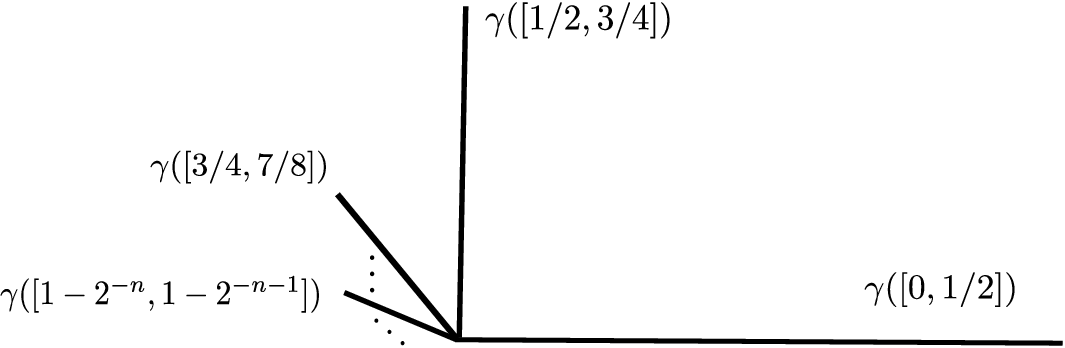}

\vspace*{3mm}
\end{figure}

\subsection{Results of Tlas}\label{sec:Tlas}

In \cite{Tlas16}, Tlas considered the space $\Omega^{1}_{0}(M,p)$ of $C^{1}$-based loops in $M$ with vanishing derivative at end points, and studied the homotopy equivalence relation on it  induced from compositions of rank-1 homotopies. A crucial ingredient is a factorization theorem (theorem 1 in \cite{Tlas16}; cf. theorem \ref{theo:smooth-thin}) for any based $C^{1}$-loop, which in particular associates a transfinite word to it, and leads to the notion of a \emph{whisker}, which is any element in $\Omega^{1}_{0}(M,p)$ whose reduced word is trivial. Similarly, the following result is proved:

\begin{theorem}[\cite{Tlas16}, theorem 3]\label{theo:Tlas} 
The following conditions on a loop $\gamma\in\Omega^{1}_{0}(M,p)$ are equivalent:
\begin{itemize}
\item[(i)] $\gamma$ is a whisker,
\item[(ii)] $\gamma$ is equivalent to the trivial loop via a rank-1 homotopy,\footnote{In \cite{Tlas16}, rank-1 homotopies are called \emph{thin homotopies}. This does not represent a notational inconsistency in view of remark \ref{rem:equiv-thin-int}.}
\item[(iii)] For any semi-simple Lie group $G$, the holonomy of $\gamma$ for every smooth connection on every principal $G$-bundle $P$ is trivial.\footnote{In particular, this indicates that theorem 5 in \cite{Spa01}, which states the isomorphism $$\cL^{\text{ret}}(M,p)\cong\cL^{G}(M,p)$$ when $G$ is connected and non-solvable, is not valid. This is the case since every non-retraceable thin loop (as in section \ref{sec:thin-not-ret}) will define a nontrivial element in the kernel of the epimorphism $\Phi: \cL^{\text{ret}}(M,p)\rightarrow\cL^{\text{thin}}(M,p)$ following from remark \ref{rem:thin->rank-1} and proposition \ref{prop:group-like}, and consequently $\cL^{\text{ret}}(M,p)\not\cong\cL^{\text{thin}}(M,p)$. Notice that Spallanzani's work suggests that Tlas' theorem is also valid for any connected and non-solvable Lie group $G$.}
\end{itemize}
\end{theorem}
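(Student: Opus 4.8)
The plan is to prove the cyclic chain of implications $(i)\Rightarrow(ii)\Rightarrow(iii)\Rightarrow(i)$, so that each is cheapest to establish from the previous. The implication $(ii)\Rightarrow(iii)$ is essentially the content of the earlier $G$-holonomy machinery, so I would dispatch it first: a rank-1 homotopy restricts holonomy to the trace of the homotopy image, and since it is in particular a thin homotopy (remark~\ref{rem:thin->rank-1} identifies thin homotopies with a subclass of rank-1 homotopies, and the reverse containment at the level of equivalence classes is built into Tlas' framework), the holonomy of any smooth connection is invariant along it; being equivalent to the trivial loop thus forces trivial holonomy, with no semi-simplicity hypothesis needed for this direction. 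The implication $(i)\Rightarrow(ii)$ is the factorization-theoretic step: by the factorization theorem (theorem 1 of \cite{Tlas16}), $\gamma$ decomposes along its transfinite word, and a whisker is by definition the case of trivial reduced word; one then builds the rank-1 homotopy to the constant loop by a (possibly transfinite) cancellation of adjacent inverse segments, each cancellation being a retracing-type rank-1 homotopy, with the limiting process controlled by the $C^1$ structure and the vanishing of derivatives at endpoints (the role of $\Omega^1_0(M,p)$) so that the total homotopy is again $C^1$ and rank-1. This is where the genuine analytic work of \cite{Tlas16} enters, and I would simply invoke it.

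The implication $(iii)\Rightarrow(i)$ is the one I expect to be the main obstacle, and it is the only place semi-simplicity is used. The strategy is contrapositive: if $\gamma$ is \emph{not} a whisker, its reduced word $w$ is a nontrivial transfinite word in segments of $M$, and I must produce a \emph{single} semi-simple Lie group $G$, a principal $G$-bundle $P$, and a smooth connection $A$ whose holonomy around $\gamma$ is nontrivial. The idea is to pull back to the free situation: the reduced word $w$, read appropriately, is a nontrivial element of a free group (or a suitable transfinite/pro-completion thereof) on the distinct segments appearing in $w$, and one needs a representation of that free object into a semi-simple $G$ (e.g. $\mathrm{SU}(2)$ or $\mathrm{SL}(2,\mathbb{C})$) that detects $w$. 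For finitely generated free groups this is classical (free groups embed in $\mathrm{SU}(2)$), but here the word may be genuinely transfinite, so the hard part is an approximation/continuity argument: one must choose the connection so that parallel transport along the successive segments realizes group elements converging — in the right topology — to a nontrivial element, exploiting that $\gamma$ spends ``most'' of its parameter interval on the cofinal part of the word and that the vanishing-derivative condition lets us glue local model connections smoothly. Concretely I would build $A$ on a tubular neighborhood of $\gamma([0,1])$, assigning to each generating segment a transport element, then extend by a partition of unity; the estimate that the infinite product of these elements (in the order dictated by $w$) does not collapse to the identity is the crux, and it is precisely non-solvability/semi-simplicity of $G$ that provides enough room (no nontrivial relations, positive ``escape rate'' of products) to guarantee this. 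The footnote's remark that Spallanzani's argument suggests validity for all non-solvable $G$ indicates that solvability is exactly the obstruction: in a solvable $G$ the iterated commutators terminate and one can no longer separate $w$ from the identity, so the hypothesis cannot be weakened past non-solvable.

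Assembling these three implications closes the cycle and proves the equivalence. I would organize the writeup as: (a) a short paragraph reducing $(ii)\Rightarrow(iii)$ to the holonomy-invariance already established; (b) a citation-level treatment of $(i)\Rightarrow(ii)$ via the factorization theorem; and (c) the bulk of the argument for $(iii)\Rightarrow(i)$, first handling the finite-word case via an explicit free-group representation into $\mathrm{SU}(2)$, then treating the transfinite case by the approximation argument sketched above, with the smooth-gluing details relegated to the model-connection construction near $\gamma$.
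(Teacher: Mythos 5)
First, a point of comparison: the paper does not prove this statement at all --- it is quoted verbatim from \cite{Tlas16} (theorem 3) as an imported result, so the only benchmark is Tlas' own argument, which your proposal also ends up leaning on. Your step $(i)\Rightarrow(ii)$ is explicitly a citation of Tlas' factorization theorem, which is fine for a survey but means that part is not a proof. Your step $(ii)\Rightarrow(iii)$ reaches the right conclusion but the stated reason is backwards: a rank-1 homotopy is \emph{not} ``in particular a thin homotopy'' --- remark \ref{rem:thin->rank-1} says thin homotopies are a special case of rank-1 homotopies, and a rank-1 homotopy need not have image contained in the union of the two loop images. The correct (and easy) argument is direct: pull the connection back along $\eta$; since $\mathrm{rk}(d\eta)\leq 1$ everywhere, the pulled-back curvature $\eta^{*}F_{A}$ vanishes identically, so parallel transport is unchanged along the homotopy, and indeed no semi-simplicity is needed here.

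The genuine gap is $(iii)\Rightarrow(i)$, which you correctly identify as the crux but do not actually supply. The plan ``build $A$ on a tubular neighborhood of $\gamma([0,1])$, assign to each generating segment a transport element, extend by a partition of unity, and estimate that the transfinite product does not collapse'' does not go through as described: the image of a $C^{1}$ loop is in general not an embedded submanifold, so there is no tubular neighborhood; more seriously, the ``letters'' of the reduced word are sub-segments of the \emph{same} loop, which can intersect, overlap, and accumulate on one another, so you cannot assign independent holonomies to them --- the free-group-into-$\mathrm{SU}(2)$ picture is exactly what must be \emph{proved} to be realizable by a single smooth connection, not assumed. Likewise the convergence and non-triviality of the resulting transfinite product is asserted (``positive escape rate'') rather than established, and this is precisely where the analytic work of \cite{Tlas16} lives. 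As it stands, your outline reproduces the easy implication, cites Tlas for the second, and for the third sketches a strategy whose key obstacles (non-embedded image, overlapping segments, transfinite word, convergence of products) are named but not overcome; so it cannot be accepted as a proof independent of \cite{Tlas16}.
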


\begin{remark}\label{rem:equiv-thin-int}
Thin homotopy is a special case of a rank-1 homotopy \cite{CP94}, and after suitable generalized reparametrizations, it follows from Tlas' factorization theorem that every whisker in $\Omega^{\text{ps}}(M,p)$ is a thin loop. Therefore, it also follows from theorem \ref{theo:Tlas} that two loops in $\Omega^{\text{ps}}(M,p)$ are  thin homotopic if and only if they are homotopic through a rank-1 homotopy, and we have the isomorphism
\begin{equation}
\cL^{\text{thin}}(M,p)\cong \cL^{\text{int}}(M,p).
\end{equation}
Moreover, when $G$ is semi-simple, thin homotopy is the largest $G$-holonomy relation possible in $\Omega^{\text{ps}}(M,p)$, i.e. $\cL^{\text{thin}}(M,p)= \cL^{G}(M,p)$. 
\end{remark}

\section{Results}\label{sec:results}

We will provide several structural results on the comparison between thin equivalence and retrace equivalence of piecewise-smooth based loops. Ultimately, Tlas' theorem (theorem \ref{theo:Tlas}) implies a stronger version of our main result (theorem \ref{theo:smooth-thin}). Nevertheless, we believe that our approach is still interesting, since working with thin homotopies rather that rank-1 homotopies provides a natural mechanism for generalizations to the space $\Omega^{0}(M,p)$ (remark \ref{rem:cont}), while the simplicity of the definition of thin homotopy makes many considerations more elementary.

Our strategy consists of studying the structure of the sets $\gamma^{-1}(0)$. This gives a simple proof of the Ashtekar--Lewandowski theorem \cite{AL93} on the equivalence of thin and retrace homotopies for piecewise-analytic loops. First, we provide a general topological result concerning thin homotopy on continuous based loops.

\begin{lemma}\label{lemma:thin-int}
If $\gamma_{0},\gamma_{1}\in\Omega^{0}(M,p)$ are thin equivalent, then $\gamma_{0}([0,1])\cap\gamma_{1}([0,1])$ is the image of a based loop $\gamma\in\Omega^{0}(M,p)$ which is thin equivalent to $\gamma_{0}$ and $\gamma_{1}$.
\end{lemma}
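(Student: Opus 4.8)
The plan is to produce $\gamma$ essentially as a reparametrized subloop of $\gamma_0$ whose image is exactly the intersection $K := \gamma_0([0,1])\cap\gamma_1([0,1])$, and then verify it is thin equivalent to both $\gamma_0$ and $\gamma_1$. First I would reduce to the case of a single thin homotopy: if $\gamma_0 \sim_{R_{\text{thin}}}\gamma_1$ by a finite chain $\gamma_0 = \delta_0, \delta_1,\dots,\delta_n = \gamma_1$ of thin homotopies, I would try to run the argument on each link and chain the results together, being careful that the intersection of the endpoint images, not the images of intermediate loops, is what must appear. (This bookkeeping is a secondary nuisance: the intersection $\gamma_0([0,1])\cap\gamma_1([0,1])$ need not relate simply to $\gamma_0([0,1])\cap\delta_1([0,1])$, so the cleanest route may be to first establish the single-homotopy case in a slightly stronger form — namely that \emph{every} connected "common piece" can be realized — and then iterate.) So assume now $\eta:[0,1]^2\to M$ is a thin homotopy with $\eta([0,1]^2)\subseteq \gamma_0([0,1])\cup\gamma_1([0,1])$.

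The key observation is that the thinness condition forces $K = \gamma_0([0,1])\cap\gamma_1([0,1])$ to be a compact subset of $M$ that is "swept out" by both loops, and the homotopy $\eta$ itself provides, for each point of $K$, a path inside $\gamma_0([0,1])\cup\gamma_1([0,1])$ connecting the two loops through that point. The construction of $\gamma$ I have in mind: the preimage $\gamma_0^{-1}(K)$ is a closed subset of $[0,1]$; I would like to collapse the open complementary arcs. On each maximal open interval $(a,b)$ of $[0,1]\setminus\gamma_0^{-1}(K)$, the restriction $\gamma_0|_{[a,b]}$ is a loop at $\gamma_0(a)=\gamma_0(b)\in K$ whose interior avoids $K$; I would argue this loop is itself thin (homotopic to the constant loop at $\gamma_0(a)$ within $\gamma_0([0,1])$ — indeed $\eta$ restricted appropriately, or a straightforward rescaling homotopy inside the image of that arc of $\gamma_0$, does this) and hence can be excised up to thin equivalence. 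After excising all of them — here one must handle the possibility of infinitely many such intervals, as in the example of section \ref{sec:thin-not-ret}, by a limiting/transfinite argument or by noting their total "contribution" is thin — one is left with a loop $\gamma$ whose image is exactly $K$ and which is thin equivalent to $\gamma_0$. Running the symmetric argument on $\gamma_1$ produces a loop with image $K$ thin equivalent to $\gamma_1$; one then checks, again using that both sit inside $K$ and are cofinal reparametrizations of the "same" traversal order dictated by $\eta$, that these two candidate loops are thin equivalent to each other, so either one serves as the desired $\gamma$.

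The main obstacle I anticipate is precisely the lack of local finiteness: $\gamma_0^{-1}(K)$ can be a complicated closed set (a Cantor-like set), so "collapsing the complementary arcs" is not a finite concatenation and must be justified as a genuine thin homotopy in $\Omega^0(M,p)$ — which is exactly why the lemma is stated over continuous loops rather than piecewise-smooth ones. I would address this by building a single continuous homotopy $H:[0,1]^2\to M$ that simultaneously shrinks \emph{all} complementary arcs: parametrize so that the $j$-th arc is rescaled by a factor depending continuously on $s$ and tending to $0$, using a uniformly convergent construction (a rescaling of the "standard shrink" $\beta_s(t)=\beta((1-s)t)$ applied arc-by-arc with controlled moduli of continuity). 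The image of $H$ stays inside $\gamma_0([0,1])$, so the thinness condition $H([0,1]^2)\subseteq\gamma_0([0,1])\cup\gamma([0,1])$ is automatic. Continuity of $H$ at the limit points of the arc endpoints is the one genuinely delicate point and is where I would spend the real effort; everything else is the reparametrization bookkeeping already implicit in the paper's earlier remarks.
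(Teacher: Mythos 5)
Your overall route is the same as the paper's: pass to the closed set $I_{0}=\gamma_{0}^{-1}(K)$, write its complement as countably many open intervals, observe that on the closure of each such interval $\gamma_{0}$ restricts to a thin loop whose interior avoids $K$, and produce $\gamma$ by collapsing those arcs, concluding thin equivalence with $\gamma_{0}$ and hence with $\gamma_{1}$. So in outline you have reproduced the paper's argument; the issue is where the effort is placed.

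The genuine gap is the parenthetical assertion that each complementary arc satisfies $\gamma_{0}(a)=\gamma_{0}(b)\in K$. You state this as if it were a point-set triviality, but it is precisely the step at which the hypothesis that $\gamma_{0}$ and $\gamma_{1}$ are \emph{thin} equivalent (and not merely homotopic) must be invoked; the paper makes this inference explicitly (``since $\gamma_{0}$ and $\gamma_{1}$ are thin homotopic, this is only possible if $\gamma_{0}(a)=\gamma_{0}(b)$''), using the constraint $\eta([0,1]^{2})\subseteq\gamma_{0}([0,1])\cup\gamma_{1}([0,1])$. Without that equality the arc restrictions are not loops at all, the collapsed map is not even well defined as a continuous element of $\Omega^{0}(M,p)$, and your excision scheme does not start; so this needs an argument, not a remark --- it is the only place where thinness constrains the structure of $\gamma_{0}$. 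Two smaller comparisons: the symmetric construction on $\gamma_{1}$ and the comparison of the two candidate loops are unnecessary, since once $\gamma\sim_{R_{\text{thin}}}\gamma_{0}$ and $\gamma_{0}\sim_{R_{\text{thin}}}\gamma_{1}$, transitivity of thin equivalence (a finite composition of thin homotopies) already gives the statement, which is how the paper concludes. On the other hand, your concern about collapsing infinitely many arcs by a single homotopy that is continuous at accumulation points of the arc endpoints is legitimate; the paper sidesteps it by defining $\gamma$ directly (equal to $\gamma_{0}$ on $I_{0}$, constant on each complementary interval) and asserting the thin homotopy, so your ``uniform modulus of continuity'' point is a fair supplement rather than a divergence from the paper's method.
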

\begin{proof}
The images $\gamma_{0}([0,1])$ and $\gamma_{1}([0,1])$ are closed since $[0,1]$ is compact and $M$ is Hausdorff. Hence $I_{i}:=\gamma^{-1}_{i}\left(\gamma_{0}([0,1])\cap\gamma_{1}([0,1])\right)$ is closed in $[0,1]$ for $i=0,1$. The complements $[0,1]\setminus I_{i}$ are open, and hence equal to an at most countable disjoint union of open intervals in $[0,1]$. Consider the case $i=0$ and any such a subinterval $(a,b)\subset [0,1]\setminus I_{0}$. Then we have that $\gamma_{0}((a,b))\subset \gamma_{0}([0,1])\setminus\left(\gamma_{0}([0,1])\cap\gamma_{1}([0,1])\right)$, while $\gamma_{0}(a),\gamma_{0}(b)\in\gamma_{0}([0,1])\cap \gamma_{1}([0,1])$. Since $\gamma_{0}$ and $\gamma_{1}$ are thin homotopic, this is only possible if $\gamma_{0}(a)=\gamma_{0}(b)=p_{0}$. Any thin homotopy would then collapse the image $\gamma_{0}([a,b])$ onto $p_{0}$. It follows that the restriction $\gamma_{0}|_{[a,b]}$ is necessarily a thin loop based at $p_{0}$. The map
\[
\gamma(t) =\left\{
\begin{array}{cl}
\gamma_{0}(t) & \text{if}\quad t\in I_{0},\\\\
\gamma_{0}(a)=\gamma_{0}(b) & \text{if}\quad t\in(a,b)\subset [0,1]\setminus I_{0}
\end{array}
\right.
\]
is continuous by construction, and such that $\gamma([0,1])= \gamma_{0}([0,1])\cap\gamma_{1}([0,1])$. It follows that $\gamma_{0}$ is thin homotopic to $\gamma$, and consequently $\gamma_{1}$ is too.
\end{proof}

Given a based loop $\gamma\in \Omega^{\text{ps}}(M,p)$, we will associate the following closed subsets of $[0,1]$ to it: 
\[
I_{\gamma}:=\gamma^{-1}(p),\qquad\qquad
J_{\gamma}:=\{t\in[0,1]\;:\; \gamma\; \text{is smooth at $t$ and\;\;} d\gamma|_{t} =0\}.
\]
As we have seen in section \ref{sec:thin-not-ret}, it is easy to provide examples of smooth based loops $\gamma$ where $I_{\gamma}$ and $J_{\gamma}$ are not finite subsets. General closed subsets of $[0,1]$ could be rather complicated (e.g. Cantor sets). This complexity is not excluded from the subsets $I_{\gamma}$ and $J_{\gamma}$.  Namely, we have the following result.

\begin{proposition}\label{prop:I-J}
For any pair of closed subsets $I,J\subset [0,1]$, there exist smooth loops $\gamma,\gamma'\in\Omega^{\infty}(M,p)$ such that (i) $I_{\gamma} = I$;  (ii) $I_{\gamma'}=\{0,1\}$ and $J_{\gamma'}=J$. 
\end{proposition}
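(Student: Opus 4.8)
The plan is to construct the desired loops by a ``universal'' interpolation recipe, building $\gamma$ so that $\gamma$ returns to $p$ on a prescribed closed set and leaves $p$ on the complementary open intervals, and building $\gamma'$ similarly but with the differential vanishing on $J$ instead. First I would fix an embedding and work locally: since the statement is about the existence of loops in an arbitrary smooth manifold $M$, it suffices (via a chart around $p$ and a cutoff) to construct the loops with image in a small coordinate ball, so I may as well take $M=\RR^m$ with $m\geq 1$ and $p=0$. For part (i), given a closed set $I\subset[0,1]$ containing a dense enough structure (and after possibly adjoining $\{0,1\}$, which we are free to do since we only need $I_\gamma=I$ for \emph{some} loop, and if $I$ fails to contain $0,1$ we can reparametrize), write $[0,1]\setminus I=\bigsqcup_{n}(a_n,b_n)$ as an at most countable disjoint union of open intervals. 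On each $(a_n,b_n)$ I would place a scaled bump: define $\gamma$ on $[a_n,b_n]$ to be $f\!\left(\frac{t-a_n}{b_n-a_n}\right)\cdot v_n$ for a suitable nonzero vector $v_n\in\RR^m$, where $f$ is the bump function from \eqref{eq:bump}, and set $\gamma(t)=0$ for $t\in I$. This is exactly the mechanism used in section \ref{sec:thin-not-ret}.

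The key steps are then: (1) verify $\gamma$ is well-defined and $\gamma(0)=\gamma(1)=0$ (automatic once $0,1\in I$, and otherwise handled by the two extreme intervals abutting $0$ or $1$, on which one uses a half-bump or simply includes $0,1$ in $I$); (2) verify smoothness, which is the crux — at an interior point $t\in I$, $\gamma$ is continuous and equals $0$ with all one-sided derivatives zero on each adjacent bump interval because the one-sided derivatives of $f$ at its endpoints vanish to all orders; the genuinely delicate case is a point $t\in I$ that is an accumulation point of the intervals $(a_n,b_n)$, and there one must choose the scaling vectors $v_n$ to shrink fast enough — e.g. $\|v_n\|\leq (b_n-a_n)^n$, or more robustly $\|v_n\|\to 0$ together with a bound forcing all difference quotients of every order to vanish at accumulation points — so that $\gamma$ is genuinely $C^\infty$ there; (3) confirm $I_\gamma=I$, i.e. $\gamma(t)=0 \iff t\in I$, which holds because $f>0$ on $(0,1)$ and $v_n\neq 0$, so $\gamma\neq 0$ on every $(a_n,b_n)$.

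For part (ii), I would use the same skeleton but arrange $I_{\gamma'}=\{0,1\}$ and $J_{\gamma'}=J$. Here I cannot let $\gamma'$ hit $p$ except at the endpoints, so I would take $\gamma'$ to be a small loop staying in $\RR^m\setminus\{0\}$ in its interior — e.g. translate the previous construction by a fixed nonzero base vector $w$, or realize $\gamma'$ as a loop whose image is a small arc near $w$. To impose $d\gamma'|_t=0$ exactly on the closed set $J$, write $[0,1]\setminus J=\bigsqcup_n(c_n,d_n)$ and, on each such interval, prescribe $\gamma'$ by an antiderivative construction: let $\gamma'(t)=w+\int_0^t g(u)\,du$ where $g:[0,1]\to\RR^m$ is a smooth map with $g^{-1}(0)=J$, built again from scaled bumps $g|_{[c_n,d_n]}=f\!\left(\frac{u-c_n}{d_n-c_n}\right)u_n$ with $u_n\neq 0$ chosen small enough for smoothness at accumulation points of the $(c_n,d_n)$, and with $\int_0^1 g(u)\,du=0$ so that $\gamma'(1)=w=\gamma'(0)$ — this last constraint is arranged by pairing up two intervals and giving them opposite vectors $u_n,-u_n$, or by a single global rescaling after the fact. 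Then $d\gamma'|_t=g(t)$, so $J_{\gamma'}=g^{-1}(0)=J$, and $\gamma'$ never returns to $p=0$ in $(0,1)$ provided $w$ is chosen large compared to $\sup\|\gamma'-w\|$, giving $I_{\gamma'}=\{0,1\}$.

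The main obstacle I anticipate is step (2): proving genuine $C^\infty$-smoothness of the pieced-together maps at points of $I$ (resp. $J$) that are limit points of infinitely many bump intervals whose lengths may themselves not shrink (a closed set can have complementary intervals of lengths not tending to zero only if there are finitely many, but in a Cantor-type set the lengths do go to $0$, yet possibly slowly). The standard fix is a Faà di Bruno / difference-quotient estimate showing that if the amplitudes $\|v_n\|$ decay superpolynomially relative to the interval lengths, then every derivative of the glued map extends continuously by $0$ across $I$; one can either invoke a Whitney-type extension argument on the closed set $I$, or give a direct elementary estimate on $\big|\gamma^{(k)}(t)\big|$ for $t$ in the $n$-th interval and take $n\to\infty$. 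I would present the direct estimate, since it keeps the construction self-contained and explicit, mirroring the spirit of the example in section \ref{sec:thin-not-ret}.
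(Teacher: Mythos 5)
Your part (i) is essentially the paper's own construction: reduce to a coordinate ball around $p$, write $[0,1]\setminus I$ as a countable union of intervals $(a_{n},b_{n})$, and place on each a copy of the bump $f$ scaled by a vector whose norm decays superpolynomially in the interval length so that all derivatives of the glued map vanish at accumulation points of the $(a_{n},b_{n})$. In fact you are more explicit than the paper about the one point that genuinely needs care (the paper takes $r_{n}\in(0,1]$ arbitrary and leaves the decay condition implicit), and your difference-quotient estimate is the right way to close it. One caveat you should drop, though: the parenthetical that if $0,1\notin I$ one can ``reparametrize'' is vacuous --- every loop based at $p$ has $\{0,1\}\subset I_{\gamma}$, so (i) tacitly presupposes $\{0,1\}\subset I$; no reparametrization (which fixes the endpoints) can remove this.

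Part (ii) as written fails, and the failure is not cosmetic. Your loop $\gamma'(t)=w+\int_{0}^{t}g(u)\,du$ with $w\neq 0$ is based at $w$, not at $p$: it is not an element of $\Omega^{\infty}(M,p)$, and since it never meets $p$ at all, $I_{\gamma'}=\gamma'^{-1}(p)=\emptyset$, not $\{0,1\}$ --- recall $I_{\gamma'}$ is the preimage of the \emph{basepoint} $p$, which any admissible loop must attain at $t=0,1$. So you are forced to take $w=0$, and then you must guarantee that the primitive $\Gamma(t)=\int_{0}^{t}g$ is nonzero for every $t\in(0,1)$ while still satisfying $\Gamma(1)=0$; your closing-up device (pairing intervals and giving them opposite vectors $u_{n},-u_{n}$, or a ``global rescaling'', which cannot change a nonzero integral into zero anyway) works directly against this, since after each cancelling pair $\Gamma$ returns to its previous value --- for the first pair that value is $0$ --- so the loop revisits $p$ in the interior, or at least you have no control excluding it. This is exactly the point where the paper's proof diverges from yours: there all bump directions are taken in a fixed cone (e.g. $\theta_{n}\in(0,\pi/2)$), so the primitive moves monotonically away from the origin and cannot return, and the loop condition $\gamma'(1)=p$ is restored by scaling the primitive by the bump function $f$ (with further constraints on the amplitudes); if you adopt that route you must then still verify that the zero set of $d(f\Gamma)=f'\Gamma+fg$ is exactly $J$, which needs an argument (sign/cone considerations, or an alternative closing mechanism such as letting the direction of $g$ rotate through a total angle $2\pi$ as a function of $\int_{0}^{t}\lvert g\rvert$, so that $\Gamma$ traces a circle and vanishes only at the ends). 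Until the basepoint issue is repaired and the nonvanishing of the primitive on $(0,1)$ is established, your proof of (ii) does not go through.
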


\begin{proof}
It is sufficient to verify the claim when $M = \RR^{2}$ and $p=0$. The proof is modeled on a celebrated theorem of H. Whitney, which states that for any closed subset $I\in\RR$, there is a smooth function $f:\RR\rightarrow \RR$ such that $f^{-1}(0)=I$. The basic idea for the proof of Whitney's theorem is to express the open set $[0,1]\setminus I$ as a countable disjoint union of open intervals 
\[
[0,1]\setminus I_{\gamma} = \bigsqcup_{n\in\cI} (a_{n},b_{n}).
\]

Over the closure $[a_{n},b_{n}]$ of any such an interval, we define the restriction of the loop $\gamma$ to be 
\[
\gamma|_{[a_{n},b_{n}]}:= f\circ\phi^{-1}_{n} \cdot \left(r_{n}\cos\left(\theta_{n}\right),r_{n}\sin\left(\theta_{n}\right)\right),
\]
where $\phi_{n}(t):= b_{n}t + a_{n}(1-t)$, $r_{n}\in(0,1]$, $\theta_{n}\in (0,\pi/2)$ are arbitrary, and $f$ is the bump function \eqref{eq:bump}. Then (i) readily follows. As for (ii), the same mechanism yields a smooth loop $\gamma:[0,1]\rightarrow \RR^{2}$ such that $I_{\gamma}=\{0,1\}$ and $\{t\in[0,1]\;:\; d\gamma|_{t}=0\}=J$ by consideration of a primitive (whose existence would depend on imposing further constraints on $\{r_{n}\}$) and scaling it by the bump function $f$.
\end{proof}

\begin{remark}
Even though each restriction $\gamma|_{[a_{n},b_{n}]}$ for any connected component $(a_{n},b_{n})\subset [0,1]\setminus I_{\gamma}$ of any piecewise smooth loop $\gamma$ leads to a piecewise-smooth loop $\gamma'$ for which $I_{\gamma'}=\{0,1\}$, it follows from proposition \ref{prop:I-J} that the subset $J_{\gamma'}$ could still in principle be an arbitrary closed subset in $[0,1]$. Therefore, based smooth loops on a manifold $M$ could in general traverse the base point $p$ in an arbitrarily wild way, and even over any connected component $(a,b)\subset[0,1]\setminus I_{\gamma}$, the set $J_{\gamma}\cap[a,b]$ could be arbitrarily wild as well. 
\end{remark}

\begin{lemma}\label{lemma:nowhere-dense}
Every based loop $\gamma\in\Omega^{\text{\emph{ps}}}(M,p)$ is thin homotopic to a loop $\gamma'$ for which $I_{\gamma'}$ is nowhere dense. 
\end{lemma}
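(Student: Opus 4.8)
The plan is to collapse to points every maximal subinterval of $[0,1]$ on which $\gamma$ is constantly equal to $p$. Write $I_\gamma:=\gamma^{-1}(p)$, a closed set, and $U:=\operatorname{int}(I_\gamma)$; being open, $U$ is a disjoint union $\bigsqcup_{n\in\cI}(a_n,b_n)$ with $\cI$ at most countable, and $[a_n,b_n]\subseteq I_\gamma$ forces $\gamma\equiv p$ on each $[a_n,b_n]$. The structural fact on which everything rests, and which I would establish first, is that away from the finitely many breakpoints of $\gamma$ the loop vanishes to infinite order at every point of $\overline U$: on each smooth arc the $k$-th derivative of $\gamma$ is continuous and vanishes on the open set $U$, hence on $\overline U$. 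In particular $\gamma$ is infinitely flat at the endpoints $a_n,b_n$, with at most finitely many exceptions. I would also dispose at once of the degenerate case $\overline U=[0,1]$ (i.e. $\gamma\equiv p$): there one takes $\gamma':=\alpha^{-1}\cdot\alpha$ for any nonconstant piecewise-smooth path $\alpha$ based at $p$ --- which exists as soon as $M$ is not zero-dimensional near $p$ --- noting that $I_{\gamma'}=\{0,1\}$ and that $\gamma'$ is thin homotopic to $p$ inside its own contractible image; so from now on $\overline U\subsetneq[0,1]$ and $c:=\mu([0,1]\setminus U)\in(0,1]$.

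Next I would realise the collapse as the endpoint of an explicit homotopy that shrinks all the constancy intervals \emph{uniformly}. For $s\in[0,1)$ put $Z_s:=c+(1-s)\mu(U)\in(0,1]$ and let $\Psi_s\colon[0,1]\to[0,1]$ be the increasing piecewise-affine bijection with slope $1/Z_s$ on $[0,1]\setminus U$ and $(1-s)/Z_s$ on $U$; it scales each $[a_n,b_n]$ by the factor $1-s$ and renormalises, and $\Psi_0=\operatorname{id}$. Set $\eta(s,t):=\gamma(\Psi_s^{-1}(t))$ for $s<1$. Since $\gamma$ is constant on each fibre of $q(t):=\mu([0,t]\setminus U)/c$, there is a unique continuous loop $\gamma'$ with $\gamma=\gamma'\circ q$, and one checks $\eta$ extends continuously to $[0,1]^2$ with $\eta(1,\cdot)=\gamma'$; thus $\eta(0,\cdot)=\gamma$ and $\eta(1,\cdot)=\gamma'$ is precisely $\gamma$ with every $[a_n,b_n]$ collapsed. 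Since each value $\eta(s,t)$ is a value of $\gamma$, we have $\eta([0,1]^2)\subseteq\gamma([0,1])$, and as $\gamma'$ has the same image this meets the image constraint of Definition~\ref{def:thin-h}. It is important that the homotopy be written this way --- with $\gamma$ on the \emph{outside} and $\Psi_s^{-1}$ varying --- rather than as a standard reparametrisation family of $\gamma'$, since it is the flatness of $\gamma$ that will absorb the corners.

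It then remains to verify two things. First, $I_{\gamma'}$ is nowhere dense: it equals $q(I_\gamma)$, closed as a continuous image of a compact set, and were it to contain an open interval $J$ then $\gamma'\equiv p$ on $J$, hence $\gamma=\gamma'\circ q\equiv p$ on the nonempty open set $q^{-1}(J)$, so $q^{-1}(J)\subseteq U$ is a union of some of the $(a_n,b_n)$; but $q$ collapses each of these, so $J=q(q^{-1}(J))$ would be countable --- a contradiction. Second, and this is the substantive point, I must check that $\eta$ is a thin homotopy \emph{in} $\Omega^{\text{ps}}(M,p)$, i.e. that every $\eta(s,\cdot)$ lies in $\Omega^{\text{ps}}(M,p)$ (in particular $\gamma'\in\Omega^{\text{ps}}(M,p)$) and that $\eta$ is piecewise-smooth on a paving of $[0,1]^2$. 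The mechanism is the infinite flatness above: where $\Psi_s^{-1}(t)\in U$ one has $\eta\equiv p$; where $\Psi_s^{-1}(t)\notin\overline U$ the map $\Psi_s^{-1}$ is affine with nonzero slope, so $\eta$ is as smooth as $\gamma$; and across the interfaces --- the $t$-values corresponding to the $a_n,b_n$ and to their accumulation points in $\overline U$ --- $\eta(s,\cdot)$ vanishes to infinite order and hence extends smoothly. Since $\gamma$ has finitely many breakpoints, this leaves $\eta(s,\cdot)$ with only finitely many genuine corners, so $\eta(s,\cdot)\in\Omega^{\text{ps}}(M,p)$; the same reasoning carried out jointly in $(s,t)$ shows $\eta$ is $C^\infty$ off the finitely many smooth curves $t=\Psi_s(t_i)$ swept out by the breakpoints $t_i$ of $\gamma$, i.e. piecewise-smooth on a polygonal paving. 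Hence $\eta$ exhibits $\gamma\sim_{R_{\text{thin}}}\gamma'$.

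I expect the regularity verification of the previous paragraph to be the main obstacle. Its delicacy stems from the fact --- visible already in Proposition~\ref{prop:I-J} --- that $\operatorname{int}(\gamma^{-1}(p))$ may have infinitely many components accumulating in a Cantor-like fashion, so the collapsing reparametrisation $\Psi_s^{-1}$ has a dense set of corners; what rescues the construction, and is precisely where the hypothesis that $\gamma$ be piecewise-smooth (rather than merely continuous) enters essentially, is that each such corner sits at a point of infinite-order flatness of $\gamma$, hence becomes invisible after composing with $\gamma$. If one only asked for $\gamma'\in\Omega^{0}(M,p)$ and for a continuous thin homotopy, this difficulty evaporates and the first two paragraphs already give a complete argument. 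I would also double-check the one boundary subtlety --- smoothness of $\eta$ as $s\to1^-$, where $\Psi_s^{-1}$ degenerates on $U$, but does so harmlessly because $\gamma$ is constant there, and $q$ has slope exactly $1/c$ off $U$, which is what makes $\gamma'$ itself piecewise-smooth.
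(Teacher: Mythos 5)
Your proof takes essentially the same route as the paper's: there, too, the lemma is established by producing a one-parameter family of (generalized) reparametrizations that shrinks each maximal subinterval on which $\gamma\equiv p$ to a point, the collapsed loop $\gamma'$ then having $I_{\gamma'}$ nowhere dense. Your version is correct and in fact more detailed than the paper's --- the explicit shrinking maps $\Psi_s$, the countability argument showing $q(I_\gamma)$ has empty interior, the treatment of the constant loop, and the infinite-order flatness of $\gamma$ on $\overline{U}$ used to keep $\gamma'$ and the slices $\eta(s,\cdot)$ piecewise-smooth are all points the paper's proof leaves implicit.
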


\begin{proof}
Every connected component in $I_{\gamma}$ corresponds to a closed subinterval (which could be a point) where the image of $\gamma$ is $p$. Consider the collection of all closed subintervals $I\subset I_{\gamma}$ which are not a point. We can construct a 1-parameter family of reparametrizations that shrink any such an $I$ into a point. The resulting based loop $\gamma'$ would be thin homotopic to $\gamma$ by construction, such that $I_{\gamma'}$ is nowhere dense, and there would be a induced bijection $[0,1]\setminus I_{\gamma}\leftrightarrow [0,1]\setminus I_{\gamma'}$.
\end{proof}

\begin{lemma}\label{lemma:thin-I-gamma}
If a based loop $\gamma\in\Omega^{\text{\emph{ps}}}(M,p)$ is thin, then there exists a thin homotopy $\eta$ between $\gamma$ and the trivial loop such that $I_{\gamma^{s}}=I_{\gamma}$ $\forall \;s\in[0,1)$, where $\gamma^{s}:=\eta(s,\cdot)$. Moreover, $\eta$ could be chosen to be piecewise-analytic if $\gamma$ is piecewise-analytic.
\end{lemma}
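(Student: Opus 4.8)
The plan is to take a thin loop $\gamma$, run the given collapsing homotopy $\eta(s,\cdot)=(1-s)\gamma$ (after reducing to the case $M=\RR^{2}$, $p=0$ as in the earlier examples) or, more intrinsically, to build a homotopy that rescales $\gamma$ toward the base point along the image of $\gamma$ itself, and to verify that such a rescaling does not create or destroy preimages of $p$ for $s<1$. First I would reduce to a local model: since $\gamma$ is thin, by Lemma \ref{lemma:thin-int} (applied to $\gamma$ and the trivial loop) there is an intermediate loop whose image is $\gamma([0,1])\cap\{p\}=\{p\}$, which is trivial, so thinness of $\gamma$ forces $\gamma([0,1])$ to be an image on which a thin homotopy to $p$ is supported; in the piecewise-smooth setting I would instead work directly with the standard rank-1 homotopy underlying the thin homotopy. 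Using remark \ref{rem:thin->rank-1}, any thin homotopy $\eta$ between $\gamma$ and $p$ can be reparametrized so that $\eta(s,\cdot)$ stays in the appropriate loop space; the key extra requirement here is that we choose the homotopy to be of the form $\eta(s,t)=\Lambda_{s}(\gamma(t))$ for a family of maps $\Lambda_{s}$ fixing $p$, so that $\eta(s,t)=p$ if and only if $\gamma(t)\in\Lambda_{s}^{-1}(p)$. For a contraction $\Lambda_{s}$ of a neighborhood of $p$ in $\RR^{2}$ onto $p$ that is the identity near $s=0$ and injective away from $p$ for all $s<1$, we get $\Lambda_{s}^{-1}(p)=\{p\}$, hence $I_{\gamma^{s}}=\gamma^{-1}(p)=I_{\gamma}$ for all $s\in[0,1)$, as required.

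The key steps, in order: (1) reduce to $M=\RR^{2}$, $p=0$, replacing $\gamma$ by a loop supported in a coordinate ball — this uses compactness of $\gamma([0,1])$ together with the fact that a thin loop's image is ``one-dimensional'' in the sense made precise by the rank-1 homotopy condition, so it can be covered by a single chart after a preliminary thin homotopy; (2) realize the thin homotopy as a family of ambient maps $\Lambda_{s}$ on $\RR^{2}$ fixing $0$, e.g. $\Lambda_{s}(x)=(1-s)x$ composed with a suitable reparametrization in $s$ making everything smooth (or piecewise-analytic), which is legitimate because the original collapsing homotopy $\eta(s,\cdot)=(1-s)\gamma$ is already of this form; (3) observe $\Lambda_{s}$ is a linear isomorphism for $s\in[0,1)$, so $\Lambda_{s}(\gamma(t))=0\iff\gamma(t)=0$, giving $I_{\gamma^{s}}=I_{\gamma}$; (4) for the piecewise-analytic refinement, note that $\Lambda_{s}(x)=(1-s)x$ is analytic in $(s,x)$ jointly, so if $\gamma$ is piecewise-analytic on some paving then $\eta$ is piecewise-analytic on the product paving, and the reparametrizations introduced in step (2) can be taken analytic as well (or simply omitted, since $(1-s)\gamma$ needs no reparametrization to stay piecewise-analytic).

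The main obstacle I anticipate is step (1): justifying that a general piecewise-smooth thin loop can be pushed, by a thin homotopy, into a single coordinate chart — and doing so \emph{without} enlarging $I_{\gamma}$ in the process, so that the reduction is compatible with the conclusion we are trying to prove. One clean way around this is to avoid the global reduction altogether and instead argue locally along the thin homotopy: a thin homotopy $\eta$ between $\gamma$ and $p$ has image contained in $\gamma([0,1])$, which is compact; cover it by finitely many charts, use remark \ref{rem:thin->rank-1} to pass to a rank-1 (intimate) homotopy, and exploit that a rank-1 homotopy factors through a one-dimensional set on which the retraction to $p$ can be arranged to be injective off $p$. Then $\eta(s,t)=p$ still forces $\gamma(t)=p$ for $s<1$ by the same injectivity argument, now carried out chart by chart. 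Filling in the details of this factorization — essentially that the ``trace'' of a rank-1 homotopy collapsing $\gamma$ to $p$ is a tree-like one-dimensional set along which nearest-point retraction is well-behaved — is where the real work lies, and it is also the point where one must be careful that the construction preserves piecewise-analyticity in the analytic case (which follows because the relevant reparametrizations and the contraction are analytic).
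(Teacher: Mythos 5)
Your proposal has a genuine gap, and it sits exactly at the step you lean on most: realizing the collapse as an ambient family $\Lambda_{s}$ with $\Lambda_{s}^{-1}(p)=\{p\}$. A thin homotopy must satisfy $\eta([0,1]^{2})\subseteq\gamma([0,1])\cup\{p\}$, and an ambient contraction such as $\Lambda_{s}(x)=(1-s)x$ does \emph{not} in general map $\gamma([0,1])$ into itself; it only does so for very special images (e.g.\ the radial example of section \ref{sec:thin-not-ret}, which is a union of segments through the origin). So for a generic thin loop the homotopy $(1-s)\gamma$ is a homotopy but not a thin one, and no argument is given that an arbitrary thin loop admits \emph{any} collapse of the form $\Lambda_{s}\circ\gamma$ with $\Lambda_{s}$ injective off $p$. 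The preliminary reduction to a single chart (or to $\RR^{2}$, $p=0$) is likewise unjustified, and — as you yourself note — would have to be done by a thin homotopy that never changes $\gamma^{-1}(p)$ along the way, which is essentially the statement being proved; composing a chart-pushing homotopy with a contraction would in general alter $I_{\gamma^{s}}$ at intermediate times. Your fallback sketch (rank-1 homotopy, ``tree-like'' trace, nearest-point retraction) is where you admit the real work lies, and it still does not address the actual danger: during an arbitrary thin or rank-1 collapse, intermediate loops $\gamma^{s}$ may sweep portions of the image across $p$, creating new elements of $I_{\gamma^{s}}$; injectivity of a retraction off $p$ is not a property one can extract from an arbitrary thin homotopy.

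The paper's proof needs none of this ambient structure. It takes an arbitrary thin homotopy $\eta$ from $\gamma$ to the trivial loop and works transversally: for each $t_{0}\notin I_{\gamma}$ the path $\delta_{t_{0}}=\eta(\cdot,t_{0})$ runs from $\gamma(t_{0})\neq p$ to $p$ inside $\gamma([0,1])$, so one sets $s_{t_{0}}=\min\{s:\delta_{t_{0}}(s)=p\}$ and reparametrizes $[0,s_{t_{0}}]$ onto $[0,1]$, extending by the constant value $p$ over $t_{0}\in I_{\gamma}$. The new homotopy $\eta'$ is still thin (its image has not grown), is piecewise-smooth (piecewise-analytic if $\eta$ is), and by construction $\eta'(s,t)=p$ with $s<1$ forces $t\in I_{\gamma}$, i.e.\ $I_{\gamma^{s}}=I_{\gamma}$ for all $s\in[0,1)$. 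In short, the missing idea in your write-up is this truncation-at-first-hitting-time reparametrization of the transversal paths of a \emph{given} thin homotopy, which replaces the unavailable ``ambient injective contraction'' and avoids any chart reduction.
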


\begin{proof}
Since $\gamma$ is a thin based loop, there exists a continuous and piecewise-smooth map $\eta:[0,1]^{2}\rightarrow M$ such that $\gamma^{s}\in\Omega^{\text{ps}}(M,p)$, $\gamma_{0}=\gamma$, $\gamma_{1}=p$, and $\eta\left([0,1]^{2}\right)=\gamma([0,1])$. Then for each fixed $t_{0}\in[0,1]$, $\delta_{t_{0}}:=\eta(\cdot,t_{0})$ is a path within $\gamma([0,1])$ connecting $\gamma(t_{0})$ and $p$. Assume that $t_{0}\notin I_{\gamma}$. Then $\delta_{t_{0}}(0)\neq p$. Define
\[
s_{t_{0}}:=\min \{\;s\in[0,1]:\;\delta_{t_{0}}(s)=p\}. 
\] 
Define a new thin homotopy $\eta':[0,1]^{2}\rightarrow M$ by first considering restrictions and reparametrizations along each $[0,s_{t_{0}}]$ for each $t_{0}\in[0,1]\setminus I_{\gamma}$. By the piecewise-smoothness of $\eta$, the extension of $\eta'$ to each $t_{0}\in I_{\gamma}$ given by $\eta'(\cdot,t_{0})=p$ would be continuous and piecewise-smooth, and piecewise-analytic when $\eta$ is so. By construction $\eta'$ is a thin homotopy between $\gamma$ and the trivial loop such that $I_{\gamma'^{s}}=I_{\gamma}$ $\forall\; s\in[0,1)$, and which is piecewise-smooth, or piecewise-analytic when $\eta$ is so. 
\end{proof}

The complexity of a thin based loop $\gamma\in\Omega^{\text{ps}}(M,p)$ is encoded in the sets $I_{\gamma}$ and $J_{\gamma}$. The general structure of $I_{\gamma}$ and $J_{\gamma}$ is considerably simpler along the subspace $\Omega^{\text{pa}}(M,p)$ of piecewise-analytic based loops, as illustrated in the subsequent theorem and its corollary (cf. \cite{AL93}). These results illustrate the fundamental difference between analyticity and smoothness of based loops with respect to thin homotopy.

\begin{theorem}\label{theo:thin-analytic}
A loop $\gamma\in \Omega^{\text{\emph{pa}}}(M,p)$ is thin if and only if it is reparametrization equivalent to a finite concatenation of retraceable loops $\gamma_{1},\dots,\gamma_{m}\in\Omega^{\text{\emph{pa}}}(M,p)$.
\end{theorem}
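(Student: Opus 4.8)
The plan is to use the structural lemmas already established---in particular Lemma~\ref{lemma:nowhere-dense} and Lemma~\ref{lemma:thin-I-gamma}---together with the key fact that a piecewise-analytic loop has a \emph{finite} preimage $I_\gamma = \gamma^{-1}(p)$ unless $\gamma$ is locally constant near $p$ on some subinterval. The ``if'' direction is immediate since any retracing is a thin homotopy and reparametrization equivalence is contained in thin equivalence, so I would dispose of it in one line. For the ``only if'' direction, I would first invoke Lemma~\ref{lemma:nowhere-dense} to replace $\gamma$ by a thin-equivalent piecewise-analytic loop $\gamma'$ with $I_{\gamma'}$ nowhere dense; but for piecewise-analytic loops a nowhere-dense zero set that is a union of closed intervals forces each such interval to be a point (an analytic arc that is constant on a subinterval is globally constant on its domain of analyticity), so $I_{\gamma'}$ is in fact a finite set $0 = t_0 < t_1 < \dots < t_r = 1$. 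Thus $\gamma'$ decomposes as a finite concatenation of piecewise-analytic \emph{based} loops $\gamma'|_{[t_{i-1},t_i]}$, and it suffices to treat the case where $I_\gamma = \{0,1\}$, i.e.\ $\gamma$ passes through $p$ only at its endpoints.

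Now suppose $\gamma\in\Omega^{\text{pa}}(M,p)$ is thin with $I_\gamma=\{0,1\}$. By Lemma~\ref{lemma:thin-I-gamma} there is a piecewise-analytic thin homotopy $\eta$ from $\gamma$ to the trivial loop with $I_{\gamma^s} = \{0,1\}$ for all $s\in[0,1)$ and $\eta([0,1]^2) = \gamma([0,1])$. The main step is then a \emph{local analytic rigidity argument}: since $\gamma$ is piecewise-analytic, partition $[0,1]$ into finitely many subintervals on which $\gamma$ is analytic, and on each such subinterval the image $\gamma([t_{j-1},t_j])$ is a compact analytic arc; the constraint that the thin homotopy stays inside $\gamma([0,1])$ forces the deformation, at the level of each analytic arc, to be a retracing along that very arc (an analytic immersion from an interval cannot fold back onto itself except on a locally finite set of points, by the identity theorem applied to $\gamma(t) = \gamma(t')$). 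I would make this precise by showing that the set $\{(t,t') : \gamma(t) = \gamma(t'),\ t\neq t'\}$, intersected with any product of analyticity intervals, is contained in the zero set of a nonzero analytic function, hence is either finite or else one arc is contained in the image of the other; reparametrizing, one reduces $\gamma$ modulo reparametrization equivalence to a reduced word in finitely many analytic arcs, which is precisely a finite concatenation of retrazable loops. I would phrase this inductively on the number $r$ of analyticity pieces: a thin piecewise-analytic loop with $r$ pieces either has a ``cancellable adjacent pair'' $\alpha\cdot\beta^{-1}\cdot\beta\cdot\alpha'$-type subword that can be removed by a retracing, strictly decreasing $r$, or else its image is an embedded tree of analytic arcs, in which case it is manifestly a finite concatenation of retrazable loops.

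The hardest part is exactly this local analytic rigidity: controlling \emph{how} a thin homotopy can collapse the image of an analytic arc. The subtlety is that, a priori, the homotopy $\eta$ could move points of $\gamma$ along the image in a complicated way, and one must rule out infinite ``wiggling''---which is genuinely possible for smooth loops, as the example of section~\ref{sec:thin-not-ret} shows, and is precisely where analyticity is used. I expect to extract this from the following dichotomy: for two analytic arcs $c_1:[0,1]\to M$ and $c_2:[0,1]\to M$ with overlapping images, either the images coincide on an analytic subarc (and then one can apply the identity theorem to conclude the images are ``as equal as possible,'' reducing to a retracing), or the images meet in a locally finite set. Iterating this over the finitely many analyticity pieces and using Lemma~\ref{lemma:thin-I-gamma} to keep $I_\gamma$ fixed along the homotopy, one obtains after finitely many retracings and reparametrizations a loop whose image is a finite embedded graph traced without backtracking except at vertices---equivalently, $\gamma$ is reparametrization equivalent to a finite concatenation of retrazable piecewise-analytic loops. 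A clean way to organize the bookkeeping is to assign to $\gamma$ the combinatorial datum of its image as a finite union of maximal analytic arcs together with the cyclic word recording the order and orientation in which $\gamma$ traverses them; thinness forces this word to reduce (in the free-group sense) to the trivial word, and each elementary reduction is realized by a retracing.
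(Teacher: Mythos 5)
Your overall architecture is sound in its routine parts: the ``if'' direction is indeed one line; for a piecewise-analytic loop the identity theorem does force $I_\gamma$ to be finite after collapsing intervals on which $\gamma\equiv p$; and the analysis of the coincidence set $\{(t,t'):\gamma(t)=\gamma(t')\}$ on products of analyticity intervals (finite, or else the arcs overlap in a common subarc) is the standard Ashtekar--Lewandowski edge decomposition, which legitimately encodes $\gamma$ as a finite word in finitely many analytic edges. This is a genuinely different organization from the paper, which never introduces a word: it uses Lemma~\ref{lemma:thin-I-gamma} to produce a piecewise-analytic null-homotopy preserving $I_\gamma$ and then argues directly on each piece between consecutive passages through $p$ that it is a reparametrized $\beta^{-1}\cdot\beta$. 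Note also that your reduction to the case $I_\gamma=\{0,1\}$ silently assumes that each inter-basepoint piece of a thin loop is itself thin; this is exactly what the $I_\gamma$-preserving homotopy of Lemma~\ref{lemma:thin-I-gamma} provides, so it should be invoked at that step, not only later.

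The genuine gap is the sentence ``thinness forces this word to reduce (in the free-group sense) to the trivial word.'' Nothing in your sketch argues this, and it is the entire content of the theorem: all of your analytic rigidity constrains $\gamma$ itself, but the only information you extract from the thin null-homotopy $\eta$ is that its image lies in $\gamma([0,1])$, and you never say why that kills a nontrivial reduced word. The inductive dichotomy you offer in its place does not fill the hole: if the image is a tree, a nonconstant edge word always admits a cancellable adjacent pair, so your second branch is vacuous as a dichotomy, and what actually has to be shown in the no-cancellation case is that the loop is \emph{not} thin --- precisely the missing implication. Moreover ``manifestly a finite concatenation of retrazable loops'' is not right even for tree-shaped images: the loop $\alpha^{-1}\cdot\delta^{-1}\cdot\delta\cdot\beta^{-1}\cdot\beta\cdot\alpha$ (out the trunk, out and back one branch, out and back the other, then home) is retrace equivalent to the trivial loop but is not a generalized reparametrization of any concatenation of loops $\beta_i^{-1}\cdot\beta_i$: the absence of interior passages through $p$ and the identities of the first and last branch tips visited are reparametrization invariants, and for such a concatenation without interior $p$-visits the first and last tips coincide, whereas here they differ. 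So ``reduces to the trivial word after finitely many retracings'' is the correct target, and the word-reduction claim must be proved. A concrete way to close the gap within your framework: your edge decomposition exhibits $\gamma([0,1])$ (or, more carefully, the union of the images of all loops occurring in the finite chain of thin homotopies, which is again a finite union of analytic arcs) as a finite graph $K$; a thin null-homotopy is in particular a null-homotopy inside $K$; since $\pi_1(K)$ is free on the edges outside a spanning tree, null-homotopy in $K$ is equivalent to triviality of the reduced edge word, and then your observation that each elementary reduction is realized by a retracing (together with the finite monotone subdivision of each analytic piece at its finitely many critical points) completes the proof. Without some argument of this kind, the proposal establishes the bookkeeping but not the theorem.
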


\begin{proof}
It suffices to verify that if $\gamma$ is thin, then there exists a piecewise-analytic reparametrization $\gamma'=\gamma\circ\phi$ such that $\gamma'$ is a finite concatenation of retraceable piecewise-analytic loops. Hence we can assume without any loss of generality that $\gamma$ is analytic. If $\gamma$ is the trivial loop, then there is nothing to prove. Otherwise, the analyticity of $\gamma$ implies that $I_{\gamma}$ is finite, i.e., $I_{\gamma}=\{t_{0}=0 < t_{1}<\dots < t_{m}=1\}$. It then follows that $\gamma = \gamma_{m}\cdot\dots\cdot \gamma_{1}$ is a concatenation of $m$ analytic loops $\gamma_{1},\dots,\gamma_{m}$, such that $I_{\gamma_{i}} = \{0,1\}$ for $i=1,\dots,m$. Each $\gamma_{i}$ is explicitly given as 
\[
\gamma_{i} = \gamma|_{[t_{i-1},t_{i}]}\circ \phi_{i},\qquad \phi_{i}(t) :=  t_{i-1}(1-t) + t_{i} t, \qquad i=1,\dots,m.
\]
We claim that each $\gamma_{i}$ is reparametrization equivalent to a based loop in $\Omega^{\text{pa}}(M,p)$ of the form $\beta^{-1}_{i}\cdot\beta_{i}$, for some piecewise-analytic analytic path $\beta_{i}$ such that $\beta_{i}(0)=p$. 

Consider a piecewise-analytic thin homotopy $\eta$ between $\eta(0,\cdot)=\gamma$ and the trivial loop $\eta(1,\cdot)=p$ as in lemma \ref{lemma:thin-I-gamma}. Then $\eta(s,t_{i})=p$ for each $i=1,\dots,m-1$, and $\eta(s,t)\neq p$ if $s\in(0,1)$ and $t\in [0,1]\setminus I_{\gamma}$. We would also have that $\gamma^{s} = \gamma^{s}_{m}\cdot\dots\cdot \gamma^{s}_{1}$ $\forall\; s\in[0,1]$. It follows that each $\gamma_{i}$, $i=1,\dots,m$, is also thin and analytic, and the sets $J_{\gamma_{i}}$ are finite. Since for each $i=1,\dots, m$, we have that $\gamma^{s}_{i}\left([0,1]\right)\subset \gamma_{i}\left([0,1]\right)$ $\forall\; s\in[0,1]$, then each $J_{\gamma_{i}}\setminus\{0,1\}$ is necessarily nonempty. Define
\begin{align*}
t_{\gamma_{i}}:=&\min\{t\in J_{\gamma_{i}}\setminus\{0,1\}\;:\;\gamma_{i}\left(\left[0,t\right]\right)=\gamma_{i}([0,1])\}\\\\
=&\min\{t\in (0,1)\;:\;\gamma_{i}\left(\left[0,t\right]\right)=\gamma_{i}([0,1])\},
\end{align*}
and let
\[
\beta'_{i}(t) := \gamma_{i}\left(t_{\gamma_{i}}t\right).
\]
Then there exists a generalized analytic reparametrization $\varphi_{i}:[0,1]\rightarrow [0,1]$ such that $\beta'_{i}=\beta_{i}\circ \varphi_{i}$ with $d\beta_{i}|_{t}\neq 0$ $\forall\; t\in(0,1)$.
If we express $\gamma_{i}=\delta_{i}\cdot(\beta_{i}\circ\varphi_{i})$, it follows from the definition of thin homotopy that $\delta_{i}$ and $\beta_{i}$ have the same image in $M$. Similarly, it follows that there is a generalized analytic reparametrization $\psi_{i}:[0,1]\rightarrow [0,1]$ such that $\delta_{i} = \beta^{-1}_{i}\circ\psi_{i}$. Hence we can express $\gamma_{i}= \left(\beta^{-1}_{i}\cdot \beta_{i}\right)\circ \vartheta_{i}$ for a piecewise-analytic reparametrization $\vartheta_{i}$. 
\end{proof}

\begin{corollary}
Two based loops $\gamma_{0},\gamma_{1}\in\Omega^{\text{\emph{pa}}}(M,p)$ are thin equivalent if and only if they are retrace equivalent (cf. \cite{AL94}). 
\end{corollary}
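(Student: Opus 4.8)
The plan is to deduce the corollary directly from Theorem \ref{theo:thin-analytic} together with the elementary observation that retrace equivalence, as introduced in Definition \ref{def:gen-h}, is built precisely out of retracings (under implicit generalized reparametrization), and that a finite concatenation of retrazable loops is exactly the kind of loop that is retrace equivalent to the trivial loop. First I would note that one direction is already contained in the general theory developed in Section \ref{sec:thin}: retrace equivalence implies thin equivalence, since retracings and generalized reparametrizations are particular instances of thin homotopies (this is recorded in Remark \ref{rem:ret->G} and in the discussion preceding Definition \ref{def:gen-h}), and this holds verbatim for piecewise-analytic loops. So the content is entirely in the converse: thin equivalence of $\gamma_0,\gamma_1\in\Omega^{\text{pa}}(M,p)$ implies retrace equivalence.

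For the converse, the strategy is a standard reduction to the case of a single thin loop via concatenation with an inverse. Given $\gamma_0\sim_{R_{\text{thin}}}\gamma_1$, I would form the based loop $\gamma_1^{-1}\cdot\gamma_0\in\Omega^{\text{pa}}(M,p)$ and observe that, since concatenation and inversion are compatible with thin homotopy (they descend to a group structure on $\cL^{\text{thin}}(M,p)$), the loop $\gamma_1^{-1}\cdot\gamma_0$ is thin. By Theorem \ref{theo:thin-analytic}, $\gamma_1^{-1}\cdot\gamma_0$ is then reparametrization equivalent to a finite concatenation of retrazable piecewise-analytic loops; but any such finite concatenation is manifestly retrace equivalent to the trivial loop (each retrazable factor $\beta_i^{-1}\cdot\beta_i$ collapses to a constant under a single retracing, and one removes them one at a time). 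Hence $\gamma_1^{-1}\cdot\gamma_0\sim_{R_{\text{ret}}}p$. Finally, using the group structure on $\cL^{\text{ret}}(M,p)$ induced by retrace equivalence — i.e. that $\sim_{R_{\text{ret}}}$ is a group-like homotopy equivalence relation, so it is a congruence for concatenation and inversion — I would left-multiply by $\gamma_1$ to conclude $\gamma_0\sim_{R_{\text{ret}}}\gamma_1$.

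The one point requiring a little care, and the only place where I expect any real friction, is the bookkeeping around associativity and reparametrization: path concatenation is associative only up to piecewise-analytic reparametrization (this is flagged in the footnotes to Definition \ref{def:gen-h}), so the manipulation "$\gamma_1\cdot(\gamma_1^{-1}\cdot\gamma_0)\sim\gamma_0$" and the claim that "$\gamma_1^{-1}\cdot\gamma_1$ retraces to $p$" must be phrased at the level of equivalence classes, where $\sim_{R_{\text{rep}}}$ is already absorbed into $\sim_{R_{\text{ret}}}$. Since Definition \ref{def:gen-h} incorporates generalized reparametrizations into retrace equivalence by design, and the relations are already known to be group-like, these identities hold in $\cL^{\text{ret}}(M,p)$ without further ado; I would simply remark this explicitly rather than re-derive it. Beyond that, the proof is a two-line formal argument and I would present it as such, citing \cite{AL94} for the analogous statement in the literature.
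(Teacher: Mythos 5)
Your argument is correct, but it is not the route the paper takes. The paper's proof works with the pair $(\gamma_{0},\gamma_{1})$ directly: it uses the intersection-of-images argument from lemma \ref{lemma:thin-int}, invokes piecewise-analyticity to write each preimage $\gamma_{i}^{-1}\left(\gamma_{0}([0,1])\cap\gamma_{1}([0,1])\right)$ as a \emph{finite} union of closed intervals, observes that the complementary ``excursion'' arcs are piecewise-analytic thin loops based at points $p^{i}_{j}$, and collapses each of these by theorem \ref{theo:thin-analytic}. Your proof instead performs the standard algebraic reduction to the identity class: since both $\sim_{R_{\text{thin}}}$ and $\sim_{R_{\text{ret}}}$ are group-like (the paper's proposition), it suffices to note that $\gamma_{1}^{-1}\cdot\gamma_{0}$ is thin, apply theorem \ref{theo:thin-analytic} once to see it is retrace equivalent to the trivial loop, and translate back in $\cL^{\text{ret}}(M,p)$. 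What your route buys is brevity and transparency: the comparison of the two loops happens entirely in the quotient groups, so the final passage from ``each piece collapses'' to ``$\gamma_{0}\sim_{R_{\text{ret}}}\gamma_{1}$'' (which the paper leaves somewhat implicit at the end of its proof) is automatic, and the unit/associativity bookkeeping is absorbed exactly as you say by the group-like property of $\sim_{R_{\text{ret}}}$. What the paper's route buys is explicitness: it exhibits concretely which subloops are retraced and keeps track of the common image of the two loops, in line with the section's general strategy of analyzing the sets $I_{\gamma}$, and it never needs the quotient-group formalism. The one point you should make explicit is that applying theorem \ref{theo:thin-analytic} to $\gamma_{1}^{-1}\cdot\gamma_{0}$ requires that loop to be thin \emph{in the piecewise-analytic sense}, i.e.\ that the hypothesis ``thin equivalent'' is read inside $\Omega^{\text{pa}}(M,p)$ as in definition \ref{def:thin-h} (piecewise-analytic thin homotopies), so that the group-like property of thin equivalence on $\Omega^{\text{pa}}(M,p)$ yields a piecewise-analytic thin trivialization of $\gamma_{1}^{-1}\cdot\gamma_{0}$; this is the same reading the paper uses when it asserts that the restricted loops are ``thin in the piecewise-analytic sense.''
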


\begin{proof}
It remains to show that thin equivalence implies retrace equivalence. In general, $\gamma_{0}([0,1])\cap\gamma_{1}([0,1])$ is a closed subset of $M$. It follows from the piecewise-analyticity of $\gamma_{0}$ and $\gamma_{1}$ that the closed subsets $I_{i}:=\gamma_{i}^{-1}\left(\gamma_{0}([0,1])\cap\gamma_{1}([0,1])\right)$, $i=0,1$, are a finite union of disjoint closed intervals in $[0,1]$, i.e.,
\[
I_{i} = \bigsqcup_{j=0}^{r_{i}} \left[c^{i}_{j},d^{i}_{j}\right],\quad  i=0,1,
\]
where $c^{i}_{j}\leq d^{i}_{j}$ for each $j=0,\dots,r_{i}$.
Clearly $c^{i}_{0}=0$  and $d^{i}_{r_{i}}=1$. If $\gamma_{0}$ and $\gamma_{1}$ are thin homotopic, it follows from the general arguments in the proof of lemma \ref{lemma:thin-int} that
\[
\gamma_{i}\left(I_{i}\right)= \gamma_{i}([0,1]),\quad i=0,1,
\]
and $\gamma_{i}\left(d^{i}_{j-1}\right) = \gamma_{i}\left(c^{i}_{j}\right)=p^{i}_{j}$ for each $i=0,1$, $j=1,\dots, r_{i}$. Hence each restriction
\[
\gamma_{i}|_{\left[d^{i}_{j-1},c^{i}_{j}\right]} 
\]
is a piecewise-analytic loop based at $p^{i}_{j}$, which is thin in the piecewise-analytic sense. By theorem \ref{theo:thin-analytic}, it is reparametrization equivalent to a finite concatenation of retraceable loops. Therefore, $\gamma_{0}$ and $\gamma_{1}$ are retrace equivalent.
\end{proof}

The examples of thin smooth based loops that are not retraceable that we have constructed before are clearly non-analytic.
In particular, it follows that there is no analog of theorem \ref{theo:thin-analytic} for piecewise-smooth based loops. The next result indicates that, nevertheless, the complexity of a piecewise-smooth thin loop $\gamma$ is entirely encoded in the complexity of the set $I_{\gamma}$. This is so since along the closure of every connected component of its complement $[0,1]\setminus I_{\gamma}$, the restriction of $\gamma$ would necessarily reduce to a reparametrization of an elementary retraceable loop.

\begin{theorem}\label{theo:smooth-thin}
If a piecewise-smooth based loop $\gamma\in \Omega^{\text{\emph{ps}}}(M,p)$ is thin, then for every connected component $(a,b)\subset [0,1]\setminus I_{\gamma}$, the restriction $\gamma|_{[a,b]}$ is reparametrization equivalent to a retraceable loop of the form $\beta^{-1}\cdot\beta$ for some smooth path $\beta:[0,1]\rightarrow M$ such that $\beta^{-1}(p)=\{0\}$.
\end{theorem}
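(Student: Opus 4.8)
The plan is to mimic the structure of the proof of Theorem \ref{theo:thin-analytic}, but to work locally on a single connected component $(a,b)\subset[0,1]\setminus I_\gamma$, where the piecewise-smooth analogue of the finiteness of $I_\gamma$ is replaced by the elementary observation that $\gamma|_{[a,b]}$ hits the base point $p$ only at the two endpoints. First I would invoke Lemma \ref{lemma:thin-I-gamma} to obtain a piecewise-smooth thin homotopy $\eta$ between $\gamma$ and the trivial loop with $I_{\gamma^s}=I_\gamma$ for all $s\in[0,1)$, where $\gamma^s=\eta(s,\cdot)$. Restricting $\eta$ to $[0,1]\times[a,b]$ and reparametrizing, and using that the endpoints $a,b$ persist in $I_{\gamma^s}$, one obtains a thin homotopy, in the piecewise-smooth sense, between the loop $\alpha:=\gamma|_{[a,b]}$ (suitably reparametrized to a loop on $[0,1]$ based at $p_0:=\gamma(a)=\gamma(b)$, which equals $p$ since $a,b\in I_\gamma$) and the trivial loop at $p_0$, with image contained in $\alpha([0,1])$.

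\textbf{Key steps.} Second, having reduced to the situation of a single piecewise-smooth thin loop $\alpha$ with $I_\alpha=\{0,1\}$, I would after a generalized reparametrization assume $\alpha$ is smooth (Proposition \ref{prop:rep-smooth}). Then, exactly as in Theorem \ref{theo:thin-analytic}, the persistence of the thin homotopy track forces $\alpha^s([0,1])\subset\alpha([0,1])$ for all $s$, so the set $J_\alpha$ of critical instants of $\alpha$ cannot be contained in $\{0,1\}$; otherwise $\alpha$ would be an embedded arc traversed once, hence not deformable within its own image to a point. I would then set
\[
t_\alpha:=\min\{t\in(0,1)\;:\;\alpha([0,t])=\alpha([0,1])\},
\]
and define $\beta(t):=\alpha(t_\alpha t)$. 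The defining property of thin homotopy (the image of the homotopy is contained in $\alpha([0,1])=\beta([0,1])$) forces the complementary piece $\delta$, where $\alpha=\delta\cdot\beta$, to have image equal to $\beta([0,1])$ as well. Finally, since the only way a reparametrization of a smooth path $\beta$ with $\beta^{-1}(p)=\{0\}$ can traverse $\beta([0,1])$ backwards to $p$ is via $\beta^{-1}$ up to a generalized reparametrization, I would conclude $\delta=\beta^{-1}\circ\varphi$ for some generalized reparametrization $\varphi$, whence $\alpha=(\beta^{-1}\cdot\beta)\circ\vartheta$ for a piecewise-smooth $\vartheta$, i.e.\ $\gamma|_{[a,b]}$ is reparametrization equivalent to the retrazable loop $\beta^{-1}\cdot\beta$.

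\textbf{Main obstacle.} The delicate point — which is precisely where analyticity was used freely in Theorem \ref{theo:thin-analytic} but must be argued more carefully here — is establishing that $t_\alpha$ is well-defined (i.e.\ $\{t:\alpha([0,t])=\alpha([0,1])\}$ is a nonempty closed set bounded away from $1$ on the relevant side) and, more seriously, that once $\alpha([0,t_\alpha])=\alpha([0,1])$ the remaining arc $\delta$ really is a generalized reparametrization of $\beta^{-1}$ rather than some wilder retracing of the image. In the smooth category the set $J_\alpha\cap(0,1)$ can be an arbitrary closed set (Proposition \ref{prop:I-J}), so one cannot argue by finiteness; instead I would argue that the thin homotopy, being of rank $\le 1$ (Remark \ref{rem:thin->rank-1}), confines $\alpha$ to a one-dimensional subset, and combine this with the fact that $\beta$ has nonvanishing derivative off $t=0$ to get that $\delta$ must retrace $\beta([0,1])$ monotonically in the image-variable — the derivative of $\delta$ being everywhere proportional to that of $\beta^{-1}$ at the matching image point — so $\delta\circ$(arclength along the image)$=\beta^{-1}\circ$(arclength), and pulling back gives the desired $\varphi$. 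Getting this last identification clean, without invoking analyticity, is the crux; everything else is a transcription of the piecewise-analytic argument.
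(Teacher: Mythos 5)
Your outline does follow the paper's route: reduce to a single component via Lemma \ref{lemma:thin-I-gamma}, pass to a smooth thin loop $\alpha$ with $I_{\alpha}=\{0,1\}$, introduce $t_{\alpha}=\min\{t\in(0,1):\alpha([0,t])=\alpha([0,1])\}$, split $\alpha=\delta\cdot\beta$, and argue that $\delta$ is a generalized reparametrization of $\beta^{-1}$. The genuine gap is in your choice of $\beta$ and the property you attribute to it. You set $\beta(t):=\alpha(t_{\alpha}t)$ and then invoke ``the fact that $\beta$ has nonvanishing derivative off $t=0$''. That is precisely what fails in the smooth category: $t_{\alpha}$ is defined by image coverage, not as a first critical instant (the paper explicitly notes that $\min\bigl(J_{\gamma'}\setminus\{0,1\}\bigr)$ need not exist), so on $[0,t_{\alpha}]$ the loop can have interior critical points and can itself back-track many times before first covering its image --- by Proposition \ref{prop:I-J} the critical set can be an arbitrary closed subset of $[0,1]$. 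With such a $\beta$ there is no well-defined ``matching image point'', your assertion that $d\delta$ is proportional to $d\beta^{-1}$ at matching points has no content, and the factorization $\delta=\beta^{-1}\circ\varphi$ does not follow. In effect you have transcribed the piecewise-analytic argument of Theorem \ref{theo:thin-analytic} (where $\beta_{i}$ is the segment up to the \emph{first} interior critical instant, so $d\beta_{i}\neq 0$ automatically) without supplying the new ingredient the smooth case requires.

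That ingredient, in the paper's proof, is to \emph{not} take $\beta$ to be the initial segment of $\alpha$ at all: instead one parametrizes the set $\gamma'([0,t_{\gamma'}])$ afresh by arc-length rectification with respect to an auxiliary Riemannian metric, obtaining a path $\beta$ with $d\beta|_{t}\neq 0$ for $t\in(0,1)$ and $\beta^{-1}(p)=\{0\}$, and only then compares both halves of $\gamma'$ with this rectified $\beta$ to produce the generalized reparametrization $\varphi$ with $\gamma'=(\beta^{-1}\cdot\beta)\circ\varphi$. Your closing remark about composing with ``arclength along the image'' gestures at this, but as written it is used to repair a $\beta$ that does not have the properties you assume, rather than to define $\beta$ in the first place. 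The other points you flag --- that $t_{\alpha}$ is attained in $(0,1)$, and that the second half $\delta$ covers the full image $\alpha([0,1])$ --- are asserted with comparable brevity in the paper itself (``by continuity''), so they are not where your attempt diverges from the paper; the missing idea is the arc-length rectified $\beta$.
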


\begin{proof}
We can assume that $\gamma$ is smooth after reparametrization. By hypo\-thesis $\gamma(a)=\gamma(b)=p$. Therefore $\gamma':=\gamma|_{[a,b]}\circ \phi$, where $\phi(t)= bt+a(1-t)$, is a smooth based loop, and moreover $I_{\gamma'}=\{0,1\}$. Consider any thin homotopy $\eta$ between $\gamma$ and the trivial loop such that $I_{\gamma^{s}}= I_{\gamma}$ $\forall s\in[0,1)$ as in lemma \ref{lemma:thin-I-gamma}. The induced restrictions $\gamma'^{s}:=\gamma^{s}|_{[a,b]}\circ \phi$ also determine a thin homotopy between $\gamma'$ and the trivial loop. Hence $\gamma'$ is a thin loop. Similarly to the proof of theorem \ref{theo:thin-analytic}, we consider
\[
t_{\gamma'}:=\min\{t\in(0,1)\;:\;\gamma'([0,t])=\gamma'([0,1])\}
\]
(notice that in the smooth case, $\min \{t\in J_{\gamma'}\setminus\{0,1\}\}$ may not exist at all). By continuity it follows that, moreover,
\[
\gamma'\left([t_{\gamma'},1]\right) = \gamma'\left([0,t_{\gamma'}]\right) = \gamma'([0,1]).
\]
The smoothness of $\gamma'$ implies that the set $\gamma'([0,t_{\gamma'}])$ could be parametrized as a path $\beta:[0,1]\rightarrow M$ such that $\beta(0)=0$ and $\beta(1)=\gamma'\left(t_{\gamma'}\right)$, for which $d\beta|_{t} \neq 0$ if $t\in(0,1)$. This could be done in terms of arc length rectification for an auxiliary Riemannian metric in $M$. It follows that the image of the retraceable loop $\beta^{-1}\cdot\beta$ coincides with the image of $\gamma'$. It moreover follows that there is a piecewise-smooth generalized reparametrization $\varphi:[0,1]\rightarrow[0,1]$ such that $\gamma' = (\beta^{-1}\cdot\beta)\circ\varphi$.
\end{proof}

\begin{remark}\label{rem:cont}
It can be verified that the proofs of lemmas \ref{lemma:nowhere-dense} and \ref{lemma:thin-I-gamma}, as well as theorem \ref{theo:smooth-thin} and corollary \ref{cor:non-eq}, can be also adapted to hold for thin homotopy defined over the larger spaces $\Omega^{0}(M,p)$.
\end{remark}

We conclude this section with the following corollaries on the structure of thin equivalence for piecewise-smooth based loops. The proof of corollary \ref{cor:non-eq} follows as an immediate consequence of proposition \ref{prop:I-J} and theorem \ref{theo:smooth-thin}. 

\begin{corollary}\label{cor:non-eq}
For any $\gamma\in\Omega^{\text{\emph{ps}}}(M,p)$, there is $\gamma'\in\Omega^{\text{\emph{ps}}}(M,p)$ such that $\gamma\sim_{\text{\emph{thin}}}\gamma'$ but $\gamma\nsim_{\text{\emph{ret}}}\gamma'$. Consequently, for any $\gamma\in\Omega^{\text{\emph{ps}}}(M,p)$, the equivalence class $[\gamma]_{\text{\emph{thin}}}$ is strictly larger than the equivalence class $[\gamma]_{\text{\emph{ret}}}$.\footnote{In contrast, for any $\gamma\in\Omega^{\text{pa}}(M,p)$, the equivalence classes of piecewise-analytic loops $[\gamma]_{\text{thin}}$ and $[\gamma]_{\text{ret}}$ coincide (cf. \cite{AL94}).} 
\end{corollary}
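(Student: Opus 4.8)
The plan is to upgrade the single example of section~\ref{sec:thin-not-ret} (or, more invariantly, Proposition~\ref{prop:I-J}) to a statement about an arbitrary $\gamma$ by a group-theoretic cancellation argument. First I would fix, once and for all, a \emph{wild thin loop} $\delta\in\Omega^{\text{ps}}(M,p)$: one that is thin equivalent to the trivial loop but \emph{not} retrace equivalent to it. Concretely, applying Proposition~\ref{prop:I-J} inside a coordinate chart around $p$, choose $\delta\in\Omega^{\infty}(M,p)$ so that $[0,1]\setminus I_{\delta}=\bigsqcup_{n}(a_{n},b_{n})$ has infinitely many components and the restrictions $\delta|_{[a_{n},b_{n}]}$ have \emph{pairwise distinct images} in $M$ --- radial spikes of pairwise distinct lengths, exactly as in section~\ref{sec:thin-not-ret}. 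Such a $\delta$ is thin because the rescaling $\eta(s,\cdot)=(1-s)\delta$, carried out in the chart, is a homotopy to the trivial loop whose image is contained in $\delta([0,1])$, hence a thin homotopy.

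Granting that this $\delta$ is not retrace equivalent to the trivial loop, the corollary follows formally. Given any $\gamma\in\Omega^{\text{ps}}(M,p)$, set $\gamma':=\delta\cdot\gamma\in\Omega^{\text{ps}}(M,p)$. Since $\sim_{R_{\text{thin}}}$ is group-like, concatenation descends to $\cL^{\text{thin}}(M,p)$ and $[\gamma']=[\delta]\,[\gamma]=[\gamma]$ there, because $[\delta]=e$; thus $\gamma\sim_{R_{\text{thin}}}\gamma'$. If we had $\gamma\sim_{R_{\text{ret}}}\gamma'$ then, $\cL^{\text{ret}}(M,p)$ being a group, cancelling $[\gamma]$ would give $[\delta]=e$ in $\cL^{\text{ret}}(M,p)$, contradicting the choice of $\delta$. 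Since every retracing (and every reparametrization) is a thin homotopy we always have $[\gamma]_{\text{ret}}\subseteq[\gamma]_{\text{thin}}$, so $\gamma'\in[\gamma]_{\text{thin}}\setminus[\gamma]_{\text{ret}}$ exhibits the strict inclusion; and $[\delta]_{\text{ret}}$ is then a nontrivial element of $\ker\Phi$, so the epimorphism $\Phi\colon\cL^{\text{ret}}(M,p)\to\cL^{\text{thin}}(M,p)$ is not injective, hence not an isomorphism.

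To see that $\delta$ is not retrace equivalent to the trivial loop I would argue via Theorem~\ref{theo:smooth-thin}. Since $\delta$ is thin, that theorem identifies each restriction $\delta|_{[a_{n},b_{n}]}$ with a reparametrization of a retrazable loop $\beta_{n}^{-1}\cdot\beta_{n}$, with $\beta_{n}$ nonconstant and $\beta_{n}([0,1])=\delta([a_{n},b_{n}])$; call these the spikes of $\delta$. Now suppose $\delta$ could be carried to the trivial loop by finitely many retracing moves (insertions or deletions of a contiguous $\eta^{-1}\cdot\eta$, performed up to generalized reparametrization). Each move changes the loop on a single sub-interval, and a sub-segment that is a reparametrized $\eta^{-1}\cdot\eta$ must be a ``palindrome''; because the spike images of $\delta$ are pairwise distinct, no palindromic sub-segment can contain two full spikes, so each deletion destroys at most one spike and each insertion destroys none. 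Finitely many moves therefore leave infinitely many spikes present, and a loop with infinitely many nonconstant spikes cannot be the trivial loop --- a contradiction.

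The step I expect to be the genuine obstacle is the last one: turning ``a retracing move destroys at most one spike, so finitely many moves cannot destroy infinitely many'' into a fully rigorous statement. This requires a modest amount of reduction theory for finite chains of retracings --- in particular ruling out that an insertion could set up a later deletion of an arbitrarily large palindromic block --- which is precisely the point glossed as ``clearly'' at the end of section~\ref{sec:thin-not-ret}. A clean way to dispatch it is to invoke Tlas' factorization theorem and the associated transfinite-word calculus (cf.\ Theorem~\ref{theo:Tlas} and \cite{Tlas16}), which in fact shows that $\ker\Phi$ is very large; but only the much weaker conclusion $\delta\not\sim_{R_{\text{ret}}}p$ is needed here, and it follows from Proposition~\ref{prop:I-J} together with Theorem~\ref{theo:smooth-thin} once this bookkeeping is made precise.
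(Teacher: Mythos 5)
Your proposal is essentially the paper's own argument: the paper justifies this corollary only with the remark that it is an immediate consequence of Proposition~\ref{prop:I-J} and Theorem~\ref{theo:smooth-thin}, i.e.\ precisely the wild thin loop of Section~\ref{sec:thin-not-ret} combined with the cancellation in the groups $\cL^{\text{ret}}(M,p)$ and $\cL^{\text{thin}}(M,p)$ that you spell out. The step you flag as the genuine obstacle --- that finitely many retracings (insertions included, under generalized reparametrization) cannot trivialize a loop with infinitely many spikes of pairwise distinct images --- is exactly the point the paper itself leaves at the level of ``clearly'', so your write-up is if anything more explicit than the paper's; just beware that Theorem~\ref{theo:Tlas} as quoted (whisker $\Leftrightarrow$ rank-1 trivial $\Leftrightarrow$ trivial holonomy) cannot by itself supply that step, since the wild loop satisfies all three conditions just as retrace-trivial loops do, so the combinatorial spike-counting (or Tlas' word calculus applied specifically to retracings) must carry the weight.
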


\begin{corollary}\label{cor:limit}
Every pair of thin equivalent loops $\gamma\sim_{R_{\text{\emph{thin}}}}\gamma'$ in $\Omega^{\text{\emph{ps}}}(M,p)$ is the limit of a sequence of pairs of retrace equivalent loops $\{\gamma_{n},\gamma'_{n}\}$ with respect to the compact-open topology, in the sense that $\{\gamma_{n}\}\rightarrow \gamma$ and $\{\gamma'_{n}\}\rightarrow \gamma'$ as $n\rightarrow \infty$, and $\gamma_{n}\sim_{R_{\text{\emph{ret}}}}\gamma'_{n}$ $\forall\; n\in\NN$.
\end{corollary}

\begin{proof}
As before, it is sufficient to consider the case when $\gamma\in \Omega^{\text{ps}}(M,p)$ is a thin loop. The general case of a pair $(\gamma,\gamma')$ of thin homotopic loops in $\Omega^{\text{ps}}(M,p)$ will follow analogously.
Consider any given thin loop $\gamma\in\Omega^{\text{ps}}(M,p)$, and let $\eta$ be a thin homotopy betwen $\gamma$ and the trivial loop such that $I_{\gamma^{s}}=I_{\gamma}$ $\forall\; s\in[0,1)$ as in lemma \ref{lemma:thin-I-gamma}. As before, let
\[
[0,1]\setminus I_{\gamma} = \bigsqcup_{n\in\cI} (a_{n},b_{n}).
\]
If $\cI$ is a finite set then there is nothing to prove. Hence we will assume that $\cI=\NN$. For any given $n\in\NN$, define $\gamma^{s}_{n}:[0,1]\rightarrow M$ as
\[
\gamma^{s}_{n}(t) =\left\{
\begin{array}{cc}
\gamma^{s}(t)& \quad \text{if}\quad \displaystyle t\in\bigsqcup_{k\leq n} (a_{k},b_{k}),\\\\
p & \quad\text{otherwise}.
\end{array}
\right.
\]
and let $\gamma_{n}:=\gamma^{0}_{n}$. It readily follows that $\gamma^{s}_{n}\in\Omega^{\text{ps}}(M,p)$ $\forall\; n\in\NN$ and $s\in[0,1]$, and each $\gamma_{n}$ is retraceable in terms of the sequence of thin homotopies $\eta_{n}$ defined by the families $\gamma^{s}_{n}$. By construction, $\gamma^{s}_{n}(I_{n})=\{p\}$, where $I_{n}= [0,1]\setminus \bigsqcup_{k\leq n} (a_{k},b_{k})$, while $\gamma^{s}_{n}\vert_{\bigsqcup_{k\leq n}[a_{k},b_{k}]} = \gamma^{s}\vert_{\bigsqcup_{k\leq n}[a_{k},b_{k}]}$  $\forall\; s\in[0,1]$.

A basis element $\cB(I,\cU)$ of the compact-open topology in $\Omega^{\text{ps}}(M,p)$ is determined by a closed set $I\subset[0,1]$ and an open set $\cU\subset M$, and consists of all loops $\gamma$ such that $\gamma(I)\subset \cU$. Assume that $\gamma\in\cB(I,\cU)$ for some $I$ and $\cU$. Since
\[
I = \left(I\cap I_{n}\right) \cup \bigsqcup_{k\leq n}\left(I \cap  (a_{k},b_{k})\right),
\]
there are two possibilities. When $I\cap I_{\gamma}\neq \emptyset$, it follows that $\gamma^{s}_{n}(I)\subset \cU$ $\forall\; n\in \NN$, i.e.  $\gamma^{s}_{n}\in\cB(I,\cU)$ $\forall\; n\in\NN,\; s\in[0,1]$. Similarly, if $I\cap I_{\gamma} = \emptyset$, then $I\subset \bigsqcup_{k\leq n_{0}} (a_{k},b_{k})$ for some $n_{0}\in\NN$. Therefore $I\cap I_{n} = \emptyset$ $\forall\; n> n_{0}$ and $\gamma^{s}_{n}\in\cB(I,\cU)$ $\forall\; n> n_{0},\; s\in[0,1]$. In both cases $\{\gamma_{n}\}\rightarrow \gamma$ in the compact-open topology. 
\end{proof}

\begin{remark}\label{rem:BV}
We can turn $M$ into a metric space by means of an auxiliary Riemannian metric on it. Under any such a choice, the compact-open topology on $\Omega^{\text{ps}}(M,p)$ is metrizable in terms of
\[
d(\gamma,\gamma'):= \sup\{d(\gamma(t),\gamma'(t))\;:\; t\in[0,1]\}.
\]
Let $\Omega^{\text{ps}}_{b}(M,p)\subset \Omega^{\text{ps}}(M,p)$ denote the subspace of loops with bounded variation, which is directly dependent on the choice of Riemannian metric in $M$. These subspaces play a fundamental role in the construction of smooth structures on spaces of piecewise-smooth based loops \cite{Sta17}. 
\end{remark}

\begin{corollary}\label{cor:non-Hausdorff}
The quotient topology on the groups $\cL^{\text{\emph{ret}}}(M,p)$ induced from the compact-open topology in $\Omega^{\text{\emph{ps}}}(M,p)$ is never Hausdorff.
\end{corollary}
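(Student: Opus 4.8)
The plan is to exhibit, inside $\cL^{\text{ret}}(M,p)$, a single sequence that converges to two distinct points; this contradicts uniqueness of limits of sequences in a Hausdorff space. (We tacitly assume $\dim M\geq 1$; for $\dim M=0$ the group $\cL^{\text{ret}}(M,p)$ is a point and there is nothing to say.)

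First I would fix the input loop. Using the construction of section \ref{sec:thin-not-ret}, performed inside a coordinate chart around $p$ and with the radii chosen so that $\sum_{n}r_{n}<\infty$, one produces a based loop $\gamma\in\Omega^{\text{ps}}_{b}(M,p)$ which is thin but satisfies $[\gamma]_{\text{ret}}\neq e$ --- the latter being exactly (a special case of) corollary \ref{cor:non-eq}. By corollary \ref{cor:limit}, $\gamma$ is the compact-open limit of a sequence $\{\gamma_{n}\}\subset\Omega^{\text{ps}}_{b}(M,p)$ of retrazable loops; since the topology of $\Omega^{\text{ps}}_{b}(M,p)$ is that induced from $\Omega^{\text{ps}}(M,p)$, we also have $\gamma_{n}\to\gamma$ in $\Omega^{\text{ps}}(M,p)$.

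Now I would pass to the quotient. Let $q:\Omega^{\text{ps}}(M,p)\rightarrow\cL^{\text{ret}}(M,p)$ be the projection, which is continuous by definition of the quotient topology; hence $q$ sends convergent sequences to convergent sequences, so $q(\gamma_{n})\rightarrow q(\gamma)=[\gamma]$ in $\cL^{\text{ret}}(M,p)$. But each $\gamma_{n}$ is retrace equivalent to the trivial loop, so $q(\gamma_{n})=e$ for every $n$; thus the constant sequence $e,e,\dots$ converges in $\cL^{\text{ret}}(M,p)$ both to $e$ and to $[\gamma]$. Since $[\gamma]\neq e$, the limit is not unique and $\cL^{\text{ret}}(M,p)$ cannot be Hausdorff. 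Equivalently, in terms of the quotient topology a set is closed precisely when its preimage under $q$ is closed, and the above shows that the preimage $\{\gamma\in\Omega^{\text{ps}}(M,p):\gamma\sim_{R_{\text{ret}}}p\}$ of $\{e\}$ contains the limit loop $\gamma$ in its closure but not $\gamma$ itself; hence $\{e\}$ is not even closed, and $\cL^{\text{ret}}(M,p)$ fails to be $T_{1}$.

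The one point that needs care --- and the only nontrivial ingredient beyond the two corollaries cited --- is the existence, for an \emph{arbitrary} smooth manifold $M$, of a loop which is simultaneously thin, of bounded variation, and not retrace equivalent to $p$. The explicit model of section \ref{sec:thin-not-ret} is built in $\RR^{2}$, so one transplants it into a chart around $p$ (and uses its obvious one-dimensional analogue when $\dim M=1$); that this transplanted loop is still not retrace equivalent to the trivial loop of $M$ is guaranteed by corollary \ref{cor:non-eq}, which is already formulated intrinsically on $M$. With that loop in hand the rest of the argument is formal, using only continuity of $q$ and the approximation statement of corollary \ref{cor:limit}; no estimates are required.
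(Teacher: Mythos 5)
Your proof is correct and follows essentially the same route as the paper: both arguments combine corollary \ref{cor:limit} (approximation of a thin, bounded-variation, non-retrace-trivial loop by retrazable loops in the compact-open topology) with the strict inclusion $[\gamma]_{\text{ret}}\subsetneq[\gamma]_{\text{thin}}$ to conclude that retrace classes are not closed, the paper phrasing this as non-closedness of the graph of $\sim_{R_{\text{ret}}}$ in $\Omega^{\text{ps}}(M,p)\times\Omega^{\text{ps}}(M,p)$ while you phrase it as non-closedness of $\{e\}$ (failure of $T_{1}$, hence of Hausdorffness). Your explicit attention to bounded variation and to transplanting the model loop into a chart of an arbitrary $M$ is a welcome, slightly more careful packaging of the same argument.
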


\begin{proof}
 it follows from corollary \ref{cor:limit} that for every $\gamma\in\Omega^{\text{ps}}(M,p)$, $[\gamma]_{\text{ret}}\subsetneq [\gamma]_{\text{thin}} \subset\overline{[\gamma]_{\text{ret}}}$ and consequently $[\gamma]_{\text{ret}}\neq \overline{[\gamma]_{\text{ret}}}$ for any $\gamma\in\Omega^{\text{ps}}(M,p)$. In parti\-cular, the subspace $\Omega^{\text{ps}}(M,p)\sim_{\text{ret}} \Omega^{\text{ps}}(M,p)\subset \Omega^{\text{ps}}(M,p)\times \Omega^{\text{ps}}(M,p)$ of retrace equivalent pairs in $\Omega^{\text{ps}}(M,p)$ is never closed. 
\end{proof}

\begin{remark}
An alternative proof of the equivalence of (i) and (iii) in theorem \ref{theo:Tlas}, concerning the correspondence between thin equivalence and $G$-equivalence for a semi-simple Lie group $G$ in $\Omega^{\text{ps}}(M,p)$, is also hinted by corollary \ref{cor:limit}. 
Since the holonomy of any smooth connection on a smooth principal $G$-bundle $P\rightarrow M$, together with a choice of fiber point $b\in\pi^{-1}(p)$, defines a continuous map $\Omega^{\text{ps}}(M,p)\rightarrow G$ satisfying Barrett's smoothness conditions \cite{Bar91}, it would be sufficient to show that $\overline{[\gamma]_{\text{thin}}} = [\gamma]_{\text{thin}}$ for any given $\gamma\in\Omega^{\text{ps}}(M,p)$. 
\end{remark}

\section{Conclusion and further remarks}

One of the motivating purposes of this work was to shed light on the intricacies of thin equivalence in spaces of piecewise smooth loops on a given smooth manifold. Since any satisfactory and mathematically rigorous approach to the study of gauge theories from the holonomy perspective would necessarily rely on functional analytic tools developed on loop spaces, a fundamental step into that direction would be to study these intricacies in careful detail. In this concluding section, we have compiled a short list of remarks and natural questions which arise when following this paradigm. The common feature that unifies all of these questions is the ultimate need to consider natural topologies on the spaces $\Omega^{\text{ps}}(M,p)$ and their induced quotients.

As illustrated by corollary \ref{cor:non-Hausdorff}, the induced quotient topologies on groups of based loops will depend drastically on the nature of the different equivalence classes of a given based loop $\gamma\in\Omega^{\text{ps}}(M,p)$. It is natural to ask if the subspace of thin equivalence pairs in $\Omega^{\text{ps}}(M,p)\times \Omega^{\text{ps}}(M,p)$ is closed under a suitable choice of function space topology in $\Omega^{\text{ps}}(M,p)$. Otherwise, this would indicate that $\Omega^{\text{ps}}(M,p)$ is ``too big", and a suitable subspace of it should be found.

\begin{question}
Are the subspaces of thin equivalent pairs in $\Omega^{\text{\emph{ps}}}(M,p)$ closed in $\Omega^{\text{\emph{ps}}}(M,p)\times \Omega^{\text{\emph{ps}}}(M,p)$ under the compact-open topology in $\Omega^{\text{\emph{ps}}}(M,p)$? If not, what is the largest concatenable subspace $\Omega^{\boldsymbol{\cdot}}(M,p)\subset \Omega^{\text{\emph{ps}}}(M,p)$ leading to a Hausdorff quotient under a suitable function space topology?
\end{question}

\begin{question}
Is there a Hausdorff subgroup $\cL^{\boldsymbol{\cdot}}(M,p)\subset \cL^{\text{\emph{thin}}}(M,p)$ (under a suitable function space topology) for which the reconstruction theorem \ref{theo:reconstruction} holds? If so, what is the smallest subgroup with these properties?
\end{question}

Once a Hausdorff topology has been fixed on the group $\cL^{\boldsymbol{\cdot}}(M,p)$, we can consider the induced groups of continuous topological group homomorphisms to $G$, which we will denote succinctly by $\Hom(\cL^{\boldsymbol{\cdot}}(M,p),G)$. These groups contain a subgroup $\Hom(\cL^{\boldsymbol{\cdot}}(M,p),G)_{0}$ consisting of those homomorphisms homotopic to the trivial homomorphism. Then we have that
\[
\Hom(\cL^{\boldsymbol{\cdot}}(M,p),G)/\Hom(\cL^{\boldsymbol{\cdot}}(M,p),G)_{0}\cong \pi_{0}\left(\Hom(\cL^{\boldsymbol{\cdot}}(M,p),G)\right).
\]
Let $\check{H}^{1}(M,G)$ denote the space of equivalence classes of principal $G$-bundles over $M$. Perhaps more interestingly, it is also very reasonable to expect to have a conjectural bijection (cf. \cite{MZ17, MZ19})
\begin{equation}\label{eq:bij-bun}
\Hom(\cL^{\boldsymbol{\cdot}}(M,p),G)/\Hom(\cL^{\boldsymbol{\cdot}}(M,p),G)_{0}\cong \check{H}^{1}(M,G).
\end{equation}

\begin{question}
Under the previous assumptions on $\Omega^{\boldsymbol{\cdot}}(M,p)$, do we always have a bijection \eqref{eq:bij-bun}?
\end{question}

Let $\Hom^{\infty}(\cL^{\boldsymbol{\cdot}}(M,p),G)$ denote the subspace of $\Hom(\cL^{\boldsymbol{\cdot}}(M,p),G)$ satisfying Barrett's smoothness conditions (H3) \cite{Bar91,CP94}. The reconstruction of gauge fields in terms of holonomy homomorphisms  for a given pair $(M,G)$, described in theorem \ref{theo:reconstruction}, can be expressed succinctly as a statement of the bijectivity of the induced map
\begin{equation}\label{eq:hol-corr}
\bigsqcup_{\{P\}\in\check{H}^{1}(M,G)}\cA_{P}/\cG_{P}\rightarrow \Hom^{\infty}(\cL^{\boldsymbol{\cdot}}(M,p),G)/G,
\end{equation}
where $G$ acts on $\Hom^{\infty}(\cL^{\boldsymbol{\cdot}}(M,p),G)$ by conjugation, corresponding to changes of base point $b\in \pi^{-1}(p)\subset P$ under right $G$-action.
While the left-hand side is not an infinite dimensional manifold, it is a stratified space containing an open smooth (infinite dimensional) part  \cite{MV81}, and in particular, it carries a natural topology. This comparison leads to the following natural and fundamental question.

\begin{question} 
Under the previous assumptions on $\Omega^{\boldsymbol{\cdot}}(M,p)$, is the bijection \eqref{eq:hol-corr} a homeomorphism?
\end{question}

Following the same train of thought, it would also be desirable to reinterpret Barrett's smoothness conditions (H3) in terms of a suitable ``smooth structure" on the induced loop group $\cL^{\boldsymbol{\cdot}}(M,p)$ that is compatible with that of the quotients $\cA_{P}/\cG_{P}$, providing a topological refinement of the correspondence \eqref{eq:hol-corr}. It would be interesting to explore this possibility within the context of the work of Stacey \cite{Sta09, Sta17}, who introduced a refining condition on spaces of piecewise-smooth loops in such a way that they admit an infinite dimensional manifold structure in the sense of Kriegl-Michor \cite{KM97}.

\vspace{3mm}

\noindent \textbf{Acknowledgments.} The author would like to thank Jos\'e A. Zapata for many discussions that led to the creation of this work, Roger F. Picken for sharing his opinion and insights on the subject, Tamer Tlas for providing remarks leading to a revision of this work, and the referee for the careful review and the very pertinent suggestions made. The author was supported by the DFG SPP 2026 priority programme ``Geometry at infinity".

\bibliographystyle{plain}
\bibliography{Loops}

\end{document}